\documentclass[10pt,a4paper]{article}
\usepackage{amsmath,amsthm,amssymb,pstricks,pst-node}
\usepackage{graphics,xcolor}
\usepackage{geometry}

\parskip=\medskipamount
\parindent=0pt

 1   
 1  

\newcommand{\C}{\mathbb{C}}
\newcommand{\R}{\mathbb{R}}
\newcommand{\A}{{\cal A}}
\newcommand{\B}{{\cal B}}
\newcommand{\Ro}{{\cal R}}

\newcommand{\e}{\epsilon}
\newcommand{\lag}{\mathfrak{g}}

\newcommand{\Or}{{\cal O}}
\def\fpd#1#2{\frac{\partial #1}{\partial #2}}
\newcommand{\F}{\mathbb{F}}

\newcommand{\lactie}{\phi}

\newtheorem{theorem}{Theorem}
\newtheorem{lemma}{Lemma}
\newtheorem{proposition}{Proposition}
\newtheorem{definition}{Definition}
\newtheorem{remark}{Remark}

\title{Routh reduction and the class of magnetic Lagrangian systems}
\author{B. Langerock$^{a,b,c,}$\footnote{Email: bavo.langerock@ugent.be}\ , E. Garc\'{\i}a-Tora\~{n}o Andr\'{e}s$^{a,}$\footnote{Email: eduardo.gtoranoandres@ugent.be} \ and F. Cantrijn$^{a,}$\footnote{Email: frans.cantrijn@ugent.be}
\\[1.5\parskip]$^{a}$ Department of Mathematics, Ghent University\\ Krijgslaan 281 S22, B9000 Ghent, Belgium\\[1.5\parskip]
 $^{b}$ Department of Mathematics, KU~Leuven\\ Celestijnenlaan 200B, B3001 Leuven, Belgium\\[1.5\parskip]
$^{c}$ Belgian Institute for Space Aeronomy\\ Ringlaan 3, B1180 Brussels, Belgium\\
}
\date{}
\begin{document}
\maketitle\vspace{-.8cm}
\begin{abstract}
In this paper, some new aspects related to Routh reduction of Lagrangian systems with symmetry are discussed. The main result of this paper is the introduction of a new concept of transformation that is applicable to systems obtained after Routh reduction of Lagrangian systems with symmetry, so-called magnetic Lagrangian systems. We use these transformations in order to show that, under suitable conditions, the reduction with respect to a (full) semi-direct product group is equivalent to the reduction with respect to an Abelian normal subgroup. The results in this paper are closely related to the more general theory of Routh reduction by stages.
\end{abstract}
{\em Key words:}\ {Symplectic reduction; Routh reduction; Lagrangian reduction; Reduction by stages}

{\em 2010 Mathematics Subject Classification:}\ 37J05; 37J15; 53D20
\tableofcontents
\section{Introduction}
Originally, Routh's reduction procedure is a technique in classical mechanics applicable to Lagrangian systems for which the Lagrangian is independent of one or more coordinates, also called ignorable or cyclic coordinates (see for instance~\cite{pars}). The method consists in eliminating the ignorable coordinates using a fixed value for the corresponding conserved momenta, provided that a certain regularity condition for the Legendre transform holds.

Routh reduction, and its generalization to the case of non-Abelian symmetry groups, has gained renewed attention in recent literature. For instance,  in~\cite{jalna,marsdenrouth}  a geometric formulation of this technique was given for Lagrangians of mechanical type ($L=T-V$) that are invariant under the action of an arbitrary Lie group.  Geometric Routh reduction was subsequently extended to arbitrary invariant Lagrangian systems~\cite{adamec,mestcram}. In all these treatments an essential ingredient is that the original invariant Lagrangian satisfies a certain regularity condition with respect to the Legendre transform. The situation where this condition can be relaxed is described in~\cite{BC}. Routh reduction can be seen as the Lagrangian analogue of (pre-)symplectic reduction for Hamiltonian systems (see~\cite{presympred,sympred}). This fact was in particular exploited in developing Routh reduction by stages as a special case of symplectic reduction by stages~\cite{routhstages,MarsdenHamRed}.

The dynamical system that is obtained after performing Routh reduction is represented by  Euler-Lagrange equations, not with respect to an ordinary Lagrangian but with respect to the so-called Routhian. This is a system which is of ``magnetic Lagrangian" type \cite{routhstages}. A magnetic Lagrangian system is a system derived from a Lagrangian which does not depend on some of the velocity coordinates and is subjected to a magnetic (or gyroscopic) force term, i.e. a force term obtained from a closed 2-form. Although the Lagrangian function is singular in the sense that it is independent of some velocities, due to the presence of this magnetic force term it may happen that the resulting system will be regular in the sense that it admits a symplectic formulation and that the equations of motion are of Hamiltonian type with respect to this symplectic form. We will review some aspects of these magnetic Lagrangian systems in Section~\ref{sec:mag}. For more details we refer to~\cite{routhstages}. It is important to keep in mind that every system obtained from a Lagrangian system with symmetry after Routh reduction, is a magnetic Lagrangian system. Since ordinary (regular) Lagrangian systems are trivially of magnetic Lagrangian type, with zero magnetic force term, we could say that Routh reduction is a reduction procedure within the category of magnetic Lagrangian systems.

In Section~\ref{sec:3} we present a direct way to write down the Routh reduced equations for Lagrangian systems whose configuration space is of the form $Q=S \times G$, with $G$ a Lie group with respect to which the Lagrangian is invariant. By `direct' we mean here that we do not have to compute the curvature of a chosen connection involved in the reduction procedure. Locally any manifold $Q$ with a free and proper $G$-action can be written as the product of $G$ and $S=Q/G$, indicating that the reduced equations we find are locally valid for any nontrivial action on $Q$. Our approach differs from the one in~\cite{jalna,marsdenrouth} in that instead of working with the `mechanical connection', we use the standard zero-curvature connection. As will be seen, this significantly reduces computations. Moreover, we show that the reduced equations are tightly related to the symplectic structure of the coadjoint orbits in $\lag^*$, the dual of the Lie algebra of $G$. As an example, we briefly discuss the case of a rigid body with a rotor.

The paper then continues with the introduction of a particular type of transformation between magnetic Lagrangian systems (Section~\ref{sec:4}). These transformations are diffeomorphisms between spaces on which magnetic Lagrangian systems are defined and they can map the respective dynamical systems into each other. This is subsequently applied in Section~\ref{sec:red stages} to the Routh reduction of a Lagrangian system with configuration space a product of a manifold with a semi-direct product group $G\ltimes V$ of a Lie group $G$ and a linear space $V$. In this case, there are two `natural' ways to apply Routh reduction: reducing with respect to the full symmetry group $G\ltimes V$ or with respect to the Abelian subgroup $V$.  If the dual action of $G$ on $V^*$ is free, it follows from Routh reduction by stages that there exists a symplectic diffeomorphism  relating the symplectic structures of both reduced systems. This symplectic diffeomorphism belongs to the class of transformations between magnetic Lagrangian systems we have introduced. We finally treat the case of Elroy's beanie as an illustrative example.

\paragraph{Notations and background.} For convenience we fix here some notations and we briefly recall some definitions concerning Lie group actions and connections on fibre bundles.

First of all, a point of a tangent bundle $TM$ will generally be denoted by $v_m$, meaning that $v_m \in T_mM$ with $m \in M$. If coordinates on $M$ are given by $(x^1, \ldots, x^n)$, corresponding bundle coordinates on $TM$ are written as $(x^i,v^i)$ (for $i=1, \ldots, n$).

Given a Lie group $G$, we will denote its Lie algebra by $\lag$ and the dual of its Lie algebra by $\lag^*$. We will write $\exp$ for the exponential map $\exp:\lag \to G$. The adjoint action of $G$ on $\lag$ is denoted by $Ad$ and the coadjoint action on $\lag^*$, which is defined  as the dual of the adjoint action, by $Ad^*$. A left action of a Lie group $G$ on a manifold $M$ is denoted by $\Phi^M: G\times M\to M; (g,m)\mapsto \Phi^M(g,m):=\Phi^M_g(m)$. We will also frequently write $gm$ instead of $\Phi^M_g(m)$. A left action $\Phi^M$ of a Lie group on $M$ induces an infinitesimal action at the Lie algebra level
\[
 \lactie^M_m:\lag\to T_mM; \xi \mapsto d/d\e|_0 (\exp \e\xi)(m).
 \]
Given $\xi \in \lag$, we will often consider the corresponding fundamental vector field $\xi_M$ on $M$, defined pointwise as $\xi_M(m) = \lactie^M_m(\xi)$. 

Throughout this paper we will mostly consider free and proper actions of a Lie group $G$ on a manifold $M$. This guarantees that the quotient manifold $M/G$ can be endowed with a differentiable structure such that the projection $\pi :M\to M/G$ is a principal fiber bundle (see~\cite{duistermaat2000lie,MarsdenHamRed}). Points in the manifold $M/G$ are typically denoted $[m]_G$, or simply $[m]$ when there is no risk for confusion. In carrying out Routh reduction, we have to make use of a connection on a principal fiber bundle. A connection on a principal fiber bundle  $\pi :M\to M/G$ is $\lag$-valued 1-form $\A$ on $M$ such that the following two conditions are satisfied: \begin{enumerate} \item it is $G$-equivariant, i.e. $(\Phi^M_g)^*\A = Ad_g \circ \A$ for any $g\in G$; \item for $\xi\in\lag$ arbitrary, $\A ( \xi_M)=\xi$.\end{enumerate} The kernel of $\A$ determines a left invariant distribution on $M$ which is complementary to the vertical distribution given by $\ker T\pi$. The former is referred as the $horizontal$ distribution spanned by the given connection. It is also true that any $G$-invariant distribution which is complementary to the vertical distribution determines a connection form in the obvious way. For each $\mu \in \lag^*$, we define the 1-form $\A_\mu$ on $M$ by $$ \A_\mu (m)(v_m)  = \langle \mu, \A (m)(v_m)\rangle ,$$ with $v_m \in T_mM$.

Given two bundles fibred over the same base manifold $\pi_1: P_1\to Q$ and $\pi_2: P_2\to Q$, the fibred product is the bundle with base manifold $Q$ and with total space \[ P_1\times_Q P_2:= \left\{(p_1,p_2)\in P_1\times P_2\mid \pi_1(p_1)=\pi_2(p_2) \right\}. \]

\section{Preliminaries on magnetic Lagrangian systems}\label{sec:mag}

A magnetic Lagrangian system is a Lagrangian system whose `configuration space' is the total space of a bundle $\e:P\to Q$, and where the Lagrangian is independent of the velocities tangent to the fibres of $\e$. Additionally, the system may be subjected to a magnetic force term. More precisely, we have the following definition (see e.g. \cite{routhstages}): 

\begin{definition} A magnetic Lagrangian system consists of a triple $(\e:P\to Q, L, \B)$ where $\e:P\to Q$ is a fibre bundle, $L$ is a smooth function on the fibred product $TQ\times_Q P$ and $\B$ is a closed 2-form on $P$. We say that $P$ is the configuration manifold of the system, $L$ is the Lagrangian and $\B$ is the magnetic 2-form. \end{definition}

Points in $Q$ and $P$ are usually denoted by $q$ and $p$, respectively.
Assuming $\dim Q = n$ and $\dim P = n+k$, local coordinates on $Q$ will be denoted by $(q^1, \ldots, q^n)$ and coordinates on $P$, adapted to the fibration $\e: P\to Q$, by $(q^i,p^a)$, $i=1,\ldots,n=\dim Q$, $a=1,\ldots,k=\dim P-\dim Q$. The induced bundle coordinates on $TQ\times_Q P$ are then given by $(q^i,v^i,p^a)$, where $(q^i,v^i)$ are the coordinates of a point on $TQ$. The Lagrangian $L$ is then locally expressed as a function of $(q^i,v^i,p^a)$. In particular, we note that $L$ does not depend on the velocities in the fibre coordinates and, therefore, becomes singular when interpreted as a Lagrangian on the full tangent bundle $TP$. The 2-form $\B$, written in local coordinates, reads
\[
\B=\frac12 \B_{ij} dq^i \wedge dq^j + \B_{ia}dq^i\wedge dp^a + \frac12 \B_{ab} dp^a\wedge dp^b\,.
\]
Before proceeding, we fix some further notations.

\begin{definition}\label{def:2}
Assume a magnetic Lagrangian system $(\e:P\to Q, L,\B)$ is given.
\begin{enumerate}
\item The fibred product $TQ\times_Q P$ will be abbreviated by $T_PQ$ and a point in $T_PQ$ will be denoted by $(v_q,p)$, where $v_q\in T_qQ$ and $p\in P$ is such that $\e(p)=q$. Similarly, $T_P^*Q$ denotes the fibred product $T^*Q\times_Q P$ and $(\alpha_q,p)$ represents an arbitrary point in $T^*_PQ$, with $\alpha_q \in T^*_qQ$ and $\e(p)=q$.
\item $V\e$ denotes the distribution on $P$ of tangent vectors vertical with respect to $\e.$
\item $\hat \e : TP \to T_PQ$ is the projection fibred over $P$ that maps $v_p\in TP$ onto $(T\e(v_p),p)\in T_PQ.$
\item $\tau_1:T_PQ \to TQ$ is the projection that maps $(v_q,p )\in T_PQ$ onto $v_q\in TQ.$
\item $\tau_2: T_PQ \to P$ is the projection that maps $(v_q,p)\in T_PQ$ onto $p\in P.$
\item $\pi_1:T_P^*Q \to T^*Q$ is the projection that maps $(\alpha_q,p )\in T_P^*Q$ onto $\alpha_q\in T^*Q.$
\item $\pi_2: T_P^*Q \to P$ is the projection that maps $(\alpha_q,p)\in T_P^*Q$ onto $p\in P.$
\item The Legendre transform corresponding to $L$ is the map $\F L:T_PQ \to T_P^*Q$ sending $(v_q,p)\in T_PQ$ into $(\alpha_q,p)\in T^*_PQ$, where $\alpha_q\in T^*_qQ$ is uniquely determined by the relation $$\langle \alpha_q, w_q\rangle = \left.\frac{d}{du}\right|_{u=0} L(v_q+uw_q,p),$$ for arbitrary $w_q\in T_qQ.$
\item The function on $T_PQ$ defined by $E_L(v_q,p) =\langle \F L (v_q,p), (v_q,p)\rangle - L(v_q,p)$ is called the energy of the magnetic Lagrangian system. (Here, the contraction of an element $(\alpha_q,p)\in T^*_PQ$ with $(v_q,p)\in T_PQ$ is defined naturally as $\langle (\alpha_q,p),(v_q,p) \rangle :=\langle \alpha_q,v_q\rangle$).
\item Let $\omega_{Q}=d\theta_{Q}$ be the canonical symplectic form on $T^*Q$. By means of the Legendre transform, we can pull-back the closed 2-form $\pi^*_1\omega_{Q}+\pi^*_2\B$ on $T^*_PQ$  to a closed 2-form on $T_PQ$
    \[
    \Omega^{L,\B}:=\F L^*(\pi^*_1\omega_{Q} +\pi^*_2\B).
    \]

\end{enumerate}
\end{definition}

Let us now specify the kind of dynamical system we associate with a magnetic Lagrangian system.

\begin{definition}\label{def:el} A curve $p(t)$ in $P$ is called a solution of the magnetic Lagrangian system $(\e:P\to Q,L,\B)$ if the induced curve $\gamma(t) = (\dot{q}(t), p(t))\in T_PQ$, with $q(t)=\e(p(t))$ for all $t$, satisfies the equation $$i_{\dot \gamma(t)} \Omega^{L,\B} (\gamma(t)) = -dE_L(\gamma(t)).$$
\end{definition}

Local expressions for the 2-form $\Omega^{L,\B}$ and the 1-form $dE_L$ read:
\begin{align*} \Omega^{L,\B} &= d\left(\fpd{L}{v^i} \right)\wedge d q^i + \frac12 \B_{ij} dq^i \wedge dq^j + \B_{ia}dq^i\wedge dp^a + \frac12 \B_{ab} dp^a\wedge dp^b,\\
dE_L & =v^i d\left(\fpd{L}{v^i} \right) + \fpd{L}{v^i} dv^i - dL=v^i d\left(\fpd{L}{v^i} \right) - \fpd{L}{q^i} dq^i - \fpd{L}{p^a} dp^a.
\end{align*}
With these coordinate expressions one can readily check that a curve $p(t) = (q^i(t),p^a(t))$ in $P$ is a solution of the magnetic Lagrangian system $(\e:P\to Q,L,\B)$ iff it satisfies the following set of mixed second and first order ordinary differential equations
\begin{align*} \frac{d}{dt}\left(\fpd{L}{\dot{q}^i}\right)&-\fpd{L}{q^i}=\B_{ij} \dot q^j + \B_{ia} \dot p^a\,, \\ & -\fpd{L}{p^a}=-\B_{ia} \dot q^i+ \B_{ab}\dot p^b\,,
\end{align*}
for $i=1,\ldots,n$ and $a=1,\ldots,k$. Remark that these equations are the standard Euler-Lagrange equations for the Lagrangian $\hat \e^* L$ on $TP$ subjected to a magnetic force term. Moreover, if $P=Q$ and if $\e$ is the identity, the above definition includes the standard definition of a Lagrangian system subjected to a magnetic force term. In this sense, the concept of a magnetic Lagrangian systems extends the standard concept of Lagrangian systems. For the time being we will primarily be interested in the type of magnetic Lagrangian systems called hyperregular.

\begin{definition}\label{def:hyperreg} A magnetic Lagrangian system is called regular if the following two conditions are satisfied:
\begin{enumerate} \item The 2-form $\pi^*_1\omega_{Q} +\pi^*_2\B$ is symplectic; \item $\F L$ is a local diffeomorphism.
\end{enumerate}
If, in addition, $\F L$ is a global diffeomorphism, the magnetic Lagrangian system is called hyperregular. \end{definition}

From the local expression for the magnetic 2-form, the first of these conditions is equivalent to $\det \B_{ab}\neq0$ provided $\dim P>\dim Q$. A large supply of regular magnetic Lagrangians is provided by the kind of magnetic Lagrangians which are inspired upon mechanical systems.

\begin{definition} A magnetic Lagrangian system $(\e:P\to Q,L,\B)$ is said to be of mechanical type if $L(v_q,p)= \frac12 \ll (v_q,p) ,(v_q,p)\gg_{\tau_2}- V(p)$ where $\ll \cdot , \cdot \gg_{\tau_2}$ is a metric on the vector bundle $\tau_2:T_PQ \to P$ and $V$ is a function on $P$.
\end{definition}

For simplicity, we will assume from now on that all magnetic Lagrangians systems we consider are hyperregular. The following proposition is a straightforward consequence of Definition~\ref{def:hyperreg}.

\begin{proposition} If a magnetic Lagrangian system $(\e:P\to Q,L,\B)$ is hyperregular, the 2-form $\Omega^{L,\B} = \F L^*(\pi^*_1\omega_{Q} +\pi^*_2\B)$ determines a symplectic structure on $T_PQ$.
\end{proposition}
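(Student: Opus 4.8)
The plan is to verify the two defining properties of a symplectic form---closedness and nondegeneracy---for $\Omega^{L,\B}$, in each case deducing the property from the corresponding feature of $\pi_1^*\omega_Q + \pi_2^*\B$ combined with the hyperregularity of $\F L$.

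First I would address closedness. Since $\omega_Q = d\theta_Q$ is exact and $\B$ is closed by the very definition of a magnetic Lagrangian system, both pullbacks $\pi_1^*\omega_Q$ and $\pi_2^*\B$ are closed 2-forms on $T_P^*Q$, and hence so is their sum. Because the exterior derivative commutes with pullback, applying $\F L^*$ preserves closedness: $d\Omega^{L,\B} = \F L^*\, d(\pi_1^*\omega_Q + \pi_2^*\B) = 0$. Note that this step uses neither the regularity of the 2-form nor the invertibility of $\F L$; it holds for any magnetic Lagrangian system.

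For nondegeneracy I would argue pointwise. Hyperregularity supplies two ingredients: condition (1) of Definition~\ref{def:hyperreg} tells us that $\pi_1^*\omega_Q + \pi_2^*\B$ is symplectic, in particular nondegenerate at every point of $T_P^*Q$; condition (2) tells us that $\F L$ is a local diffeomorphism. Here it is worth recording that $T_PQ$ and $T_P^*Q$ both have dimension $2n+k$, so that $\F L$, being a local diffeomorphism between equidimensional manifolds, has an invertible tangent map $T_{(v_q,p)}\F L$ at each point. Since the pullback of a nondegenerate bilinear form by a linear isomorphism is again nondegenerate, it follows that $\Omega^{L,\B} = \F L^*(\pi_1^*\omega_Q + \pi_2^*\B)$ is nondegenerate everywhere on $T_PQ$.

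There is no substantial obstacle in this argument; the only point requiring a moment's care is confirming that $\F L$ relates spaces of equal dimension, so that the phrase ``local diffeomorphism'' genuinely yields an isomorphism of tangent spaces and rules out degeneracy of the pullback. With closedness and nondegeneracy both established, $\Omega^{L,\B}$ is by definition a symplectic form on $T_PQ$.
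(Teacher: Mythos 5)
Your proof is correct and is precisely the ``straightforward consequence'' the paper has in mind (the paper in fact omits the argument, stating only that the proposition follows directly from Definition~\ref{def:hyperreg}): closedness comes from naturality of $d$ under pullback, and nondegeneracy from pulling back the symplectic form $\pi_1^*\omega_Q+\pi_2^*\B$ along the diffeomorphism $\F L$. The only superfluous step is the separate dimension count, since a local diffeomorphism automatically has invertible tangent maps, which already forces $\dim T_PQ = \dim T_P^*Q$.
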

We conclude that a hyperregular magnetic Lagrangian system induces a symplectic structure on $T_PQ$ and its dynamics is represented by the Hamiltonian vector field with respect to this symplectic structure and with the energy function as Hamiltonian:
\[
i_{X_{E_L}}\Omega^{L,\B} = - dE_L\,.
\]
Clearly, each integral curve $\gamma(t)$ of $X_{E_L}$ projects onto a solution $p(t)$ of the magnetic Lagrangian system.

{\bf Routh Reduction.} We now recall how the concept of magnetic Lagrangian systems enters the picture when dealing with Routh reduction of Lagrangian systems with symmetry. Our treatment thereby follows the symplectic reduction point of view: we introduce the symplectic structure on the tangent bundle $TQ$ where the Lagrangian is defined, and use invariance of $L$ to get a reduced system by applying the symplectic reduction procedure (for a detailed account, see~\cite{BC,sympred}).

Recall that a Lagrangian system is a pair $(Q,L)$ where $Q$ is the configuration manifold and $L$ is a smooth function on $TQ$.  Given a hyperregular Lagrangian system $(Q,L)$ (i.e. $\F L$ is a global diffeomorphism) one can define a
symplectic structure on $TQ$ by using $\F L$ to pull back the canonical symplectic form $\omega_Q$ on $T^*Q$. We denote it by $\Omega_L$, i.e. $\Omega_L=(\F L)^*\omega_Q$, and it is usually called the Poincar\'e-Cartan 2-form associated with $L$.

When a free and proper $G$-action $\Phi^Q$ on the configuration manifold $Q$ is given, a Lagrangian system $(Q,L)$ is said to be $G$-invariant if $L$ is an invariant function with respect to the lifted action of $\Phi^Q$ to $TQ$. The momentum map associated with this action is the map $J_L : TQ\to \lag^*$ defined as follows: for arbitrary  $v_q\in TQ$ and $\xi \in \lag$,
$$\langle J_L(v_q),\xi\rangle = \langle \F L(v_q) , \xi_{Q}(q)\rangle.$$
It is known that $J_L$ is equivariant in the sense that $J_L(T\Phi^Q_g(v_q)) = Ad^*_{g^{-1}}(J_L(v_q))$, where $Ad^*$ is the coadjoint action of $G$ on $\lag^*$.
We now introduce a specific regularity condition on the Lagrangian which will play an important role in the Routh reduction procedure.

\begin{definition} A $G$-invariant Lagrangian $L$ is called \emph{$G$-regular} if for every fixed $v_q \in TQ$ the map $\lag \to \lag^*, \xi \mapsto J_L(v_q+\xi_Q(q))$ is a diffeomorphism.
\end{definition}

Fix a regular value $\mu\in\lag^*$ of the momentum map $J_L$, and consider the submanifold $J_L^{-1}(\mu)$ of $TQ$ with the natural embedding $i_\mu:J_L^{-1}(\mu)\to TQ$. By equivariance of the momentum map $J_L$ it follows directly that $J_L^{-1}(\mu)$ is $G_\mu$-invariant (where $G_\mu$ is the isotropy subgroup of $\mu$). Moreover, the restricted action of $G_\mu$ on $J_L^{-1}(\mu)$ is free and proper, which provides a smooth manifold structure on the quotient $J_L^{-1}(\mu)/G_\mu$.  Due to the $G_{\mu}$-invariance of the map $J^{-1}_L(\mu) \to T_{Q/G_{\mu}}(Q/G), v_q \mapsto (T\pi(i_\mu(v_q)), [q]_{G_\mu})$ it induces a map $\Pi_\mu: J^{-1}_L(\mu)/G_\mu \to T_{Q/G_\mu}(Q/G)$. The following result immediately follows from a more general statement in \cite{routhstages}~(Proposition 7).

\begin{lemma}\label{lem:identi}
Let $L$ be a $G$-invariant and $G$-regular Lagrangian. Then $\Pi_\mu:J_L^{-1}(\mu)/G_\mu\to T_{Q/G_\mu}(Q/G)$ is a diffeomorphism.
\end{lemma}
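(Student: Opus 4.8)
The plan is to prove that $\Pi_\mu$ is a diffeomorphism by exhibiting an explicit two-sided inverse and then checking its smoothness; the existence of this inverse is exactly what $G$-regularity provides.

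First I would reformulate $G$-regularity in a form adapted to the projection $\pi\colon Q\to Q/G$. Since every $\xi_Q(q)$ is vertical we have $T\pi(v_q+\xi_Q(q))=T\pi(v_q)$, and because the action is free the map $\xi\mapsto\xi_Q(q)$ is a linear isomorphism onto $\ker(T\pi|_{T_qQ})$. Hence for fixed $v_q$ the affine subspace $\{v_q+\xi_Q(q):\xi\in\lag\}$ is precisely the fibre $(T\pi|_{T_qQ})^{-1}(w)$ over $w=T\pi(v_q)$, and $G$-regularity says that $J_L$ restricts to a diffeomorphism from this fibre onto $\lag^*$. As $\pi$ is a submersion every $w\in T_{[q]_G}(Q/G)$ arises this way, so for each $q\in Q$ and each $w\in T_{[q]_G}(Q/G)$ there is a \emph{unique} tangent vector $\tilde v_q\in T_qQ$ with $T\pi(\tilde v_q)=w$ and $J_L(\tilde v_q)=\mu$, and this $\tilde v_q$ automatically lies in $J_L^{-1}(\mu)$. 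I would then define the candidate inverse
$$\Psi_\mu\colon T_{Q/G_\mu}(Q/G)\to J_L^{-1}(\mu)/G_\mu,\qquad (w,[q]_{G_\mu})\mapsto[\tilde v_q]_{G_\mu},$$
where $q$ is any representative of $[q]_{G_\mu}$ (which necessarily projects to the base point $[q]_G$ of $w$).

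Next I would check that $\Psi_\mu$ is well defined and inverse to $\Pi_\mu$. Replacing $q$ by $gq$ with $g\in G_\mu$ and using the equivariance $J_L(T\Phi^Q_g(v_q))=Ad^*_{g^{-1}}(J_L(v_q))$ together with $Ad^*_{g^{-1}}\mu=\mu$ and the fundamental-vector-field identity $T\Phi^Q_g(\xi_Q(q))=(Ad_g\xi)_Q(gq)$, one sees that $T\Phi^Q_g(\tilde v_q)$ satisfies the two defining conditions of $\tilde v_{gq}$; by uniqueness $\tilde v_{gq}=T\Phi^Q_g(\tilde v_q)$, whence $[\tilde v_{gq}]_{G_\mu}=[\tilde v_q]_{G_\mu}$ and $\Psi_\mu$ is well defined. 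The two composites then follow from uniqueness: for $v_q\in J_L^{-1}(\mu)$ the vector $v_q$ itself satisfies the conditions defining $\tilde v_q$ with $w=T\pi(v_q)$, so $\Psi_\mu\circ\Pi_\mu=\mathrm{id}$, while $\Pi_\mu\circ\Psi_\mu=\mathrm{id}$ because $T\pi(\tilde v_q)=w$. Thus $\Pi_\mu$ is a bijection, and $\Pi_\mu$ is smooth since it is induced on the quotient by the smooth $G_\mu$-invariant map $v_q\mapsto(T\pi(v_q),[q]_{G_\mu})$.

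It remains to show $\Psi_\mu$ is smooth, which I expect to be the main technical step. Here I would fix a principal connection $\A$ on $\pi\colon Q\to Q/G$, trivialising $TQ$ via $(q,w,\xi)\mapsto h^{\A}_q(w)+\xi_Q(q)$ (horizontal lift plus vertical part), and consider the smooth map $\mathcal J(q,w,\xi)=J_L\big(h^{\A}_q(w)+\xi_Q(q)\big)$ on $Q\times_{Q/G}T(Q/G)\times\lag$. The reformulated $G$-regularity says $\xi\mapsto\mathcal J(q,w,\xi)$ is a diffeomorphism for each $(q,w)$, so its fibre derivative is everywhere invertible; by the parameter-dependent implicit function theorem the unique solution $\xi(q,w)$ of $\mathcal J(q,w,\xi)=\mu$ depends smoothly on $(q,w)$. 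Hence $(q,w)\mapsto\tilde v_q=h^{\A}_q(w)+\xi(q,w)_Q(q)$ is a smooth map into $J_L^{-1}(\mu)$, and by the computation above it is $G_\mu$-equivariant, so it descends to the smooth map $\Psi_\mu$ on $T_{Q/G_\mu}(Q/G)=\big(Q\times_{Q/G}T(Q/G)\big)/G_\mu$. Possessing a smooth inverse, $\Pi_\mu$ is a diffeomorphism. The only delicate points are upgrading the fibrewise diffeomorphism property to joint smoothness of $\xi(q,w)$, and verifying that the fibred products and quotients involved are genuine manifolds; both are routine consequences of freeness and properness of the relevant actions together with $\mu$ being a regular value of $J_L$.
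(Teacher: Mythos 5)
Your proof is correct, but it takes a genuinely different route from the paper: the paper offers no argument at all for this lemma, simply noting that it "immediately follows" from a more general statement, namely Proposition 7 of \cite{routhstages}, which is proved there in the broader setting of Routh reduction by stages. Your proof is instead a self-contained, direct argument, and its structure is sound: the reformulation of $G$-regularity as saying that $J_L$ restricts to a diffeomorphism from each affine fibre $(T\pi|_{T_qQ})^{-1}(w)=\{v_q+\xi_Q(q)\,:\,\xi\in\lag\}$ onto $\lag^*$ is exactly the right way to extract existence and uniqueness of the vector $\tilde v_q$; the well-definedness and two-sided-inverse checks via momentum equivariance and $Ad^*_{g^{-1}}\mu=\mu$ for $g\in G_\mu$ are correct; and the smoothness of the inverse is handled properly by introducing a principal connection to split $TQ$ as $\bigl(Q\times_{Q/G}T(Q/G)\bigr)\times\lag$, applying the parametric implicit function theorem to $\mathcal J(q,w,\xi)=\mu$ (global fibrewise injectivity patches the local solutions into a global smooth $\xi(q,w)$), and then descending by the $G_\mu$-equivariance that the uniqueness characterization provides independently of the connection. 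A minor remark: the invertibility of the fibre derivative of $\xi\mapsto\mathcal J(q,w,\xi)$, which you need for the implicit function theorem, does follow from the fibrewise diffeomorphism property, and it also shows $J_L$ is a submersion, so every value of $J_L$ is regular and $J_L^{-1}(\mu)$ is automatically an embedded submanifold; this closes the last "delicate point" you flag. What each approach buys: the paper's citation is short and places the lemma inside a general staged-reduction framework where the same identification is needed for arbitrary magnetic Lagrangian systems, while your argument makes the paper self-contained and exposes exactly which hypotheses ($G$-regularity for existence/uniqueness, freeness and properness for the quotients, a connection only as a smoothness device) are responsible for which part of the conclusion.
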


Assume now we have chosen a principal connection $\A$ on the bundle $\pi: Q \to Q/G$ and, for a given $\mu \in \lag^*$, let ${\cal A}_\mu$ be the corresponding 1-form on $Q$. Then, the 2-form $d\A_\mu$ is projectable to a 2-form $\B_\mu$ on $Q/G_\mu$, for it is $G_\mu$-invariant and it vanishes on vector fields tangent to the fibres of the bundle $Q \to Q/G_{\mu}$. We introduce the following projection maps, with notations which are more or less in agreement with those from Definition~\ref{def:2}:
\[ \overline{\pi}_1:T^*_{Q/G_\mu}(Q/G)\to T^*(Q/G)\,, \qquad \overline{\tau}_2:T_{Q/G_\mu}(Q/G)\to Q/G_\mu \,.
\]
Moreover, we write $\omega_{Q/G}$ for the canonical 2-form on $T^*(Q/G)$.

Consider again the $G$-invariant and $G$-regular Lagrangian $L : TQ \rightarrow \mathbb{R}$ and let $\mu \in \mathfrak{g}^\ast$ be a fixed regular value of the momentum map $J_L$. Define the function $L-\hat{\A}_\mu$, with $\hat{\A}_{\mu}: TQ \to \mathbb{R}, v_q \mapsto  \A_\mu (q)(v_q )$.  This function is
$G_\mu$-invariant and, in particular, its restriction to $J^{-1}_L(\mu)$ is reducible to a function on the quotient $J^{-1}_L(\mu)/G_\mu$. Using the diffeomorphism $\Pi_{\mu}$, we can consider the corresponding function on $T_{Q/G_\mu}(Q/G)$: this the so-called Routhian, denoted as $\Ro^{\mu}$. Hence, denoting the projection $J^{-1}_L(\mu) \to J^{-1}_L(\mu)/G_{\mu}$ by $\pi_{\mu}$, we have
\[
(L - \hat{\A}_{\mu})|_{J_L^{-1}(\mu)} = (\Pi_{\mu} \circ \pi_{\mu})^*\Ro^{\mu}\,.
\]
Then we state the following result (cf. \cite{BC}, Theorem 7):

\begin{proposition}[Routh reduction] Let $L$ be a hyperregular $G$-invariant, $G$-regular Lagrangian with configuration space $Q$, and let $\mu \in \lag^*$ denote a regular value of the momentum map $J_L$. Then, the magnetic Lagrangian system $(Q/G_\mu\to Q/G, \Ro^{\mu},\B_\mu)$, as constructed above, has the property that every solution of the original Euler-Lagrange equations corresponding to the momentum value $\mu$ projects onto a solution of $(Q/G_\mu\to Q/G, \Ro^{\mu},\B_\mu)$. Conversely, every solution in $Q/G_\mu$ of  $(Q/G_\mu\to Q/G, \Ro^{\mu},\B_\mu)$ is the projection of a solution to the Euler-Lagrange equations for $L$ with momentum $\mu$.
\end{proposition}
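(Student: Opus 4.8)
The plan is to read the statement as an instance of Marsden--Weinstein symplectic reduction applied to the symplectic system canonically attached to $L$, followed by an identification of the reduced object with the magnetic Lagrangian system $(Q/G_\mu\to Q/G,\Ro^\mu,\B_\mu)$ via the diffeomorphism $\Pi_\mu$ of Lemma~\ref{lem:identi}. Since $L$ is hyperregular, $TQ$ carries the Poincar\'e--Cartan symplectic form $\Omega_L=(\F L)^*\omega_Q=d\theta_L$ (with $\theta_L=(\F L)^*\theta_Q$), and the dynamics is the flow of $X_{E_L}$, defined by $i_{X_{E_L}}\Omega_L=-dE_L$. Invariance of $L$ makes both $\Omega_L$ and $E_L$ invariant under the lifted $G$-action, and $J_L$ is exactly the associated equivariant momentum map. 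A curve is a solution of the Euler--Lagrange equations ``with momentum $\mu$'' precisely when it is an integral curve of $X_{E_L}$ contained in $J_L^{-1}(\mu)$; conservation of $J_L$ along $X_{E_L}$ guarantees that integral curves meeting the level set stay inside it.

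I would then invoke the reduction theorem. As $\mu$ is a regular value and the $G_\mu$-action on $J_L^{-1}(\mu)$ is free and proper, there is a reduced symplectic form $\Omega_\mu$ on $J_L^{-1}(\mu)/G_\mu$ characterised by $i_\mu^*\Omega_L=\pi_\mu^*\Omega_\mu$, together with a reduced Hamiltonian $\overline{E}_\mu$ obtained by letting $E_L|_{J_L^{-1}(\mu)}$ descend through $\pi_\mu$. The general theorem supplies, at one stroke, the two-way correspondence of trajectories: integral curves of $X_{E_L}$ in the level set project to integral curves of $X_{\overline{E}_\mu}$, and conversely every integral curve of $X_{\overline{E}_\mu}$ lifts to one in $J_L^{-1}(\mu)$. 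Hence, once it is shown that the triple $(J_L^{-1}(\mu)/G_\mu,\Omega_\mu,\overline{E}_\mu)$ is carried by $\Pi_\mu$ onto the symplectic system $(T_{Q/G_\mu}(Q/G),\Omega^{\Ro^\mu,\B_\mu},E_{\Ro^\mu})$ of the magnetic Lagrangian system, both halves of the Proposition follow by transporting this correspondence.

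The heart of the argument, and the step I expect to be the main obstacle, is precisely this identification. The key device is the connection shift. Because $\hat\A_\mu$ is fibrewise linear, its fibre derivative is $\A_\mu$ and its Legendre transform is fibre translation by $\A_\mu$; consequently the Cartan $1$-form of $L-\hat\A_\mu$ is $\theta_L-\tau^*\A_\mu$ (with $\tau:TQ\to Q$), so that $\Omega_{L-\hat\A_\mu}=\Omega_L-\tau^*d\A_\mu$. For the energies this shift is harmless: Euler's relation applied to the linear term gives $E_{L-\hat\A_\mu}=E_L$, which is exactly why $E_L|_{J_L^{-1}(\mu)}$ descends to the magnetic energy $E_{\Ro^\mu}$ of the Routhian. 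For the $2$-forms one restricts to $J_L^{-1}(\mu)$ and descends: the term $i_\mu^*\tau^*d\A_\mu$ produces the magnetic contribution $\overline{\tau}_2^*\B_\mu$, since $d\A_\mu$ was defined to project to $\B_\mu$ on $Q/G_\mu$, while $i_\mu^*\Omega_{L-\hat\A_\mu}$ descends to the purely canonical part $\F(\Ro^\mu)^*\overline{\pi}_1^*\omega_{Q/G}$. Recombining, $i_\mu^*\Omega_L$ descends under $\Pi_\mu$ to $\F(\Ro^\mu)^*\big(\overline{\pi}_1^*\omega_{Q/G}+\overline{\pi}_2^*\B_\mu\big)=\Omega^{\Ro^\mu,\B_\mu}$.

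The delicate point is the claim that $i_\mu^*\Omega_{L-\hat\A_\mu}$ carries no residual magnetic piece, i.e. that the entire gyroscopic term of the reduced system originates from the connection shift $\tau^*d\A_\mu$; establishing this requires controlling the interplay between $\F L$, the inclusion $i_\mu$, the horizontal/vertical splitting determined by $\A$, and the projection $\pi_\mu$, and then matching the outcome against the Legendre transform $\F(\Ro^\mu)$ through the identification $\Pi_\mu$. Encoding everything through $\Pi_\mu$, rather than through an explicit curvature computation, is what keeps this bookkeeping tractable. Once the equality of symplectic forms and of Hamiltonians is in place, the trajectory correspondence from the reduction theorem transfers verbatim, and, recalling that by Definition~\ref{def:el} the integral curves of $X_{E_{\Ro^\mu}}$ project to solutions of $(Q/G_\mu\to Q/G,\Ro^\mu,\B_\mu)$ in $Q/G_\mu$, both the direct and the converse statements of the Proposition are obtained.
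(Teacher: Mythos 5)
Your proposal is correct and takes essentially the same route as the paper: the paper does not prove this proposition itself but presents it as an instance of symplectic reduction (citing Theorem~7 of \cite{BC}), and your outline --- Marsden--Weinstein reduction of $(TQ,\Omega_L,E_L)$ at the level set $J_L^{-1}(\mu)$, identification of the quotient with $T_{Q/G_\mu}(Q/G)$ via $\Pi_\mu$, and the connection shift $L-\hat{\A}_\mu$ which leaves $E_L$ unchanged while shifting $\Omega_L$ by $\tau^*d\A_\mu$ to produce the magnetic term $\B_\mu$ --- is precisely the argument carried out in that reference and announced in the paper's ``symplectic reduction point of view'' preamble. The delicate descent step you flag is indeed where the work lies in \cite{BC}, but your decomposition of the proof is the intended one.
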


Thus, systems of magnetic Lagrangian type appear naturally when reducing dynamics of invariant Lagrangian systems according to the Routh procedure.

\begin{remark}\textnormal{As mentioned above, throughout this paper we only consider magnetic Lagrangian systems that are hyperregular. This is the standard case, where one has Routhian reduction as a special instance of symplectic reduction. However, most of the results here can be adapted to the presymplectic setting by relating Routh reduction to presymplectic, instead of symplectic,  reduction (see~\cite{presympred}).}\end{remark}

\section{Routh reduction on product manifolds}\label{sec:3}
\subsection{General theory}
In this section we describe Routh reduction for Lagrangian systems whose configuration manifold is of the form $Q=S\times G$ and the Lagrangian $L$ is defined on $TQ= TS\times TG$. There is a left action of $G$ on $Q$ given by $\Phi^Q_{g'}(s, g)= (s,L_{g'}g)=(s,g'g)$, with $L_{g'}$ left multiplication on $G$ by $g'$. The lifted action $\Phi^{TQ}$ on $TQ$ has the form $\Phi^{TQ}: G \times TQ; (g',(v_s,v_g)) \mapsto (v_s,g'v_g)$,  where $g'v_g$ is a shorthand notation for $TL_{g'}(v_g)$. (Similarly, we will write $v_g g'$ for $TR_{g'}(v_g)$, with $R_{g'}$ right translation).

\paragraph{The left identification.} We use the left identification of $TG$ with $G\times \lag$, i.e. $v_g \mapsto (g,\xi)$ with $\xi=g^{-1}v_g\in\lag$. The tangent bundle $TQ=TS\times TG$ is then isomorphic with $TS \times G\times \lag$. The lifted action $\Phi^{TQ}$ on $TQ$ corresponds to left multiplication in the middle factor of $TS \times G\times \lag$: if $(v_s,v_g) \mapsto (v_s,g,\xi)$ then  $\Phi^{TQ}_{g'}(v_s,v_g) \mapsto (v_s,g'g,\xi)$. With this left identification in mind, the fundamental vector field $\xi_Q$ corresponding to $\xi\in\lag$ takes on the form $\xi_Q(s,g)=(0_s,g,Ad_{g^{-1}}\xi)\in TS \times G \times \lag$.

If the given Lagrangian $L$ is invariant with respect to the lifted action $\Phi^{TQ}$, i.e. $L(v_s,g\xi) = L(v_s,g'\xi)$ for any $g,g'\in G$, the corresponding expression for $L$ on $TS \times G\times \lag$ becomes independent of $G$. An invariant function $L$ determines a function $\ell$ on $TS \times \lag$:
\[
\ell(v_s,\xi):=L(v_s,g\xi).
\]
The purpose now is to express Routh reduction of the $G$-invariant Lagrangian system $(Q=S \times G, L)$ in terms of the function $\ell$. It turns out that in this case we can write down an explicit form for the reduced equations of motion.

\paragraph{The momentum map.} We recall the definition of the momentum map $J_L$, evaluated at $(v_s,g\xi) \in TQ = TS \times TG$, and we substitute $\ell(v_s,\xi)$ for $L(v_s,g\xi)$:
\begin{align*}
\langle J_L(v_s,g\xi),\eta\rangle &= \left.\frac{d}{d\e}\right|_{\e=0} L\big( v_s,g(\xi +\e Ad_{g^{-1}}\eta)\big) = \left.\frac{d}{d\e} \right|_{\e=0}\ell(v_s,\xi + \e Ad_{g^{-1}} \eta) \\ &= \langle\F_2 \ell(v_s,\xi),Ad_{g^{-1}} \eta\rangle\,,
\end{align*}
for all $\eta \in \lag$, and where $\F_2\ell: TS \times \lag  \to \lag^*$ is defined by the relation
\[
\langle\F_2 \ell(v_s,\xi), \tau\rangle = \left.\frac{d}{d\e}\right|_{\e=0}\ell(v_s,\xi + \e \tau), \;\;\;\; \mbox{for all}\; \tau \in \lag\,.
\]
Consequently, we conclude from the above that \[ J_L(v_s,g\xi) = Ad^*_{g^{-1}}\F_2\ell(v_s,\xi)\,. \] For every $\mu\in\lag^*$ the equation $J_L(v_s,g\xi)=\mu$ can be equivalently written as $\F_2\ell(v_s,\xi)=Ad^*_g\mu$.

\paragraph{$G$-regularity.} Next, we investigate the $G$-regularity of $L$. Recall that this is in fact a condition on $J_L$: for every $g\in G,\xi\in\lag, v_s\in TS$ and $\nu\in\lag^*$, there exists a unique $\eta\in\lag$ such that \[ J_L(v_s,g\xi + g\eta)= \nu. \] Using the foregoing result this is equivalent to $\F_2\ell(v_s,\xi+ \eta) = Ad_{g}^*\nu$. In particular, it follows that $G$-regularity of $L$ is equivalent here to the condition that the map $\F_2\ell(v_s,\cdot):\lag\to\lag^*$ is invertible. Denote the inverse map by $\chi^{(v_s)}: \lag^* \to \lag$, where $\chi^{(v_s)}$ depends smoothly on the `parameter' $v_s \in TS$. For later use we define $\chi^{(v_s)}_\mu$ to be the restriction of $\chi^{(v_s)}$ to the co-adjoint orbit $\mathcal{O}_{\mu}$ of some $\mu\in\lag^*$.

\paragraph{The connection 1-form.} The standard connection on $Q = S \times G$ regarded as a $G$-bundle over $S$ is $\A(v_s,v_g)= v_gg^{-1} \in \lag$. This is in fact the trivial extension to $S \times G$ of the canonical connection on $G$ associated to the left multiplication. The corresponding map from $TS \times G\times \lag \to \lag$  is $(v_s,g\xi) \mapsto \A(v_s,g\xi)=Ad_g\xi$. For $\A$ to be a principal connection it should satisfy two conditions: (1) when contracted with a fundamental vector field it should provide the corresponding Lie algebra element, and (2) it should be equivariant. Using again the left identification, these conditions are easily verified: \begin{enumerate} \item $\A(\xi_Q(s,g))=\xi$ for $\xi \in\lag$ arbitrary. Indeed, from $\xi_Q(s,g)=(0_s,gAd_{g^{-1}} \xi)$, it follows that $\A(0_s,gAd_{g^{-1}}\xi) = Ad_g( Ad_{g^{-1}}\xi) = \xi$. \item $\A$ is equivariant if $\A(v_s,g'g\xi)  = Ad_{g'} \A(g\xi)$ for all $g',g\in G$ and $\xi\in\lag$. The left-hand side of this equality becomes $\A(v_s,g'g\xi)= Ad_{g'g} \xi = Ad_{g'} Ad_g\xi$, which is precisely equal to the right-hand side. \end{enumerate} The horizontal distribution for this connection is the subbundle $\ker\A = TS\times 0_G$ of $TQ$.

Next, for $\mu \in \lag^*$ we consider the 1-form $\A_{\mu}$ on $S \times G$. Evaluating it on a horizontal tangent vector $(v_s,0_g)$ and on a vertical tangent vector $(0_s,g\xi)$ at a point $(s,g)$ we find, respectively, $\A_{\mu}(v_s,0_g) = 0$ and $\A_{\mu}(0_s,g\xi) = \langle \mu, Ad_g\xi \rangle$. Now, we can compute its exterior derivative $d\A_{\mu}$ by evaluating it on two horizontal vector fields, on two vertical vector fields and on a horizontal and vertical vector field, respectively. Using Cartan's formula for the exterior derivative and taking into account the $G$-equivariance of the connection, one obtains after a routine calculation that
\[
d\A_{\mu}(s,g)\big((v_s,g\xi),(w_s,g\xi')\big) = \langle Ad^*_g\mu, [\xi,\xi'] \rangle\,.
\]

\paragraph{Computation of the 2-form $\B_\mu$.} We now reduce the 2-form $d\A_{\mu}$ to $Q/G_{\mu}= S \times G/G_{\mu}$. Recall that $G/G_\mu$ is diffeomorphic to $\Or_\mu$, with diffeomorphism defined by $G/G_{\mu} \to \Or_\mu, [g]_{G_\mu} = G_\mu g \to Ad^*_g\mu$. The tangent space to $\Or_{\mu}$ at some $\nu \in \Or_{\mu}$ is given by (cf.\ \cite[section 14.2]{marsden1999introduction})
\[
T_{\nu}\Or_{\mu} = \{ad_{\xi}^*\nu\,|\,\xi \in \lag\}\,.
\]
Define the 2-form $\B_\mu$ on $S \times \Or_{\mu}$ by the following prescription:
\[
\B_{\mu}(s,\nu)\left((v_s,ad_{\xi}^*\nu),(w_s,ad_{\xi'}^*\nu)\right) = \langle \nu,[\xi,\xi']\rangle\,.
\]
One can verify that this is a closed 2-form. In fact, $\B_\mu$ is the (trivial) extension to $S \times \Or_\mu$ of the standard Kostant-Kirillov-Souriau symplectic form $\omega^+$ on $\Or_\mu$ (see \cite[section 14.2]{marsden1999introduction}). Using the above formula for $d\A_\mu$, one easily verifies that $\B_\mu$ is the projection of $d\A_{\mu}$ on $S \times \Or_\mu$.

\paragraph{Routh reduction.} Since in the case under consideration we have that $J^{-1}_L(\mu)/G_\mu\cong TS\times \Or_\mu \cong T_{S \times \Or_\mu}S$ and, hence, the map $\Pi_\mu$ from Lemma~\ref{lem:identi} becomes the identity, we find the Routhian $\Ro^{\mu}$ by taking the restriction of $\ell- \hat{\A }_{\mu}$ to $J_L^{-1}(\mu)$ and projecting it onto $J^{-1}_L(\mu)/G_\mu$. The expression for $\Ro^{\mu}$ therefore reads
\[
\Ro^{\mu}(v_s,\nu)= \left(\ell(v_s,\xi) - \langle \nu, \xi\rangle\right)|_{\xi=\chi^{(v_s)}_\mu(\nu)}\,,
\]
where $\nu = Ad^*_g\mu \in \Or_\mu$.

\paragraph{The local equations of motion.} The equations of motion for the magnetic Lagrangian system $(S \times \Or_\mu \rightarrow S, \Ro^{\mu}, \B_\mu)$ can be split into two parts: the part describing the evolution on $TS$ and the part describing the evolution on $\Or_\mu$. We start with the latter.

Let $e^a$ denote a basis for $\lag^*$ and let $\dot \nu' = \dot \nu'_ae^a$ be an arbitrary tangent vector to $\Or_\mu$: \begin{align*}
-\langle dE_{\Ro^\mu}(v_s,\nu),(0_{v_s},\dot \nu')\rangle & =\fpd{\Ro^{\mu}}{\nu_a} (v_s,\nu)\dot \nu'_a \\
&= \left\langle \F_2 \ell(v_s,\chi^{(v_s)}(\nu)), \fpd{\chi^{(v_s)}}{\nu_a}(\nu)\dot\nu'_a \right\rangle \\
&\quad  - \dot\nu'_a \big(\chi^{(v_s)}(\nu)\big)^a - \left\langle \nu , \fpd{\chi^{(v_s)}}{\nu_a}(\nu)\dot \nu'_a \right\rangle \\
& =   -\langle \dot \nu',\chi^{(v_s)}(\nu)\rangle.
\end{align*}

Therefore, the reduced equation of motion is \[\B_\mu(\nu) (\dot\nu,\dot\nu')= -\langle \dot \nu',\chi^{(v_s)}_\mu(\nu)\rangle\] with $\dot \nu'$ arbitrary in $T\Or_\mu$. We conclude that one component of the Euler-Lagrange equation is precisely  $\dot \nu = ad^*_{\chi^{(v_s)}_\mu(\nu)} \nu$. The component in $T^*S$ has the structure of standard Euler-Lagrange equations: if $(x^i)$ is a coordinate system on $S$, then the equations of motion are
\[
\frac{d}{dt}\left( \fpd{\Ro^{\mu}}{\dot x^i}(x^i,\dot x^i,\nu)\right) - \fpd{\Ro^{\mu}}{x^i}(x^i,\dot x^i,\nu)=0.
\]

Summarizing, we have proved the following result:
\begin{theorem}\label{thm:left} Let $\ell$ denote the restriction to $TS \times \lag$ of a left $G$-invariant Lagrangian $L$ on $T(S \times G)$ and let $\F_2\ell: TS \times \lag \to \lag^*$ denote the fibre derivative w.r.t. the second argument. Fix an element $\mu$ in $\lag^*$ and assume that there exists a map $\chi^{(v_s)}:  \lag^*\to \lag$ which smoothly depends on $v_s \in TS$, such that $\F_2\ell(v_s,\chi^{(v_s)}(\nu))\equiv \nu$ for arbitrary $(v_s,\nu)\in TS \times \lag^*$.  Then, the reduced  system is the magnetic Lagrangian system $(S \times \Or_{\mu} \to S, \Ro^{\mu}, \B_{\mu})$ where the 2-form $\B_{\mu}$ on $S \times \Or_{\mu}$ and the Routhian $\Ro^{\mu}$ on $TS \times \Or_{\mu}$ are given by, respectively,
\[
\B_{\mu}(s,\nu)\left((v_s,ad_{\xi}^*\nu),(w_s,ad_{\xi'}^*\nu)\right) = \langle \nu,[\xi,\xi']\rangle\,,
\]
and
\[
\Ro^{\mu}(v_s,\nu) = \big(\ell(v_s,\xi)- \langle \nu, \xi\rangle\big)_{\xi=\chi^{(v_s)}_\mu(\nu)}.
\]
(Here, $\chi^{(v_s)}_\mu$ is the restriction of $\chi^{(v_s)}$ to the coadjoint orbit $\Or_\mu$). In a local coordinate chart $(x^i)$ on $S$, the equations of motion for the reduced system are a system of coupled first and second order differential equations: \begin{equation}\label{eq:thmredright}
\left\{\begin{array}{l}\displaystyle \dot \nu = ad^*_{\chi^{(v_s)}_\mu(\nu)} \nu, \\[.2cm] \displaystyle \frac{d}{dt}\left( \fpd{\Ro^{\mu}}{\dot x^i}(x^i,\dot x^i,\nu)\right) - \fpd{\Ro^{\mu}}{x^i}(x^i,\dot x^i,\nu)=0.
\end{array}\right.
\end{equation}
\end{theorem}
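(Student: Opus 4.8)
The plan is to specialize the general Routh reduction procedure (the Proposition on Routh reduction) to the configuration space $Q=S\times G$ with $G$ acting by left translation on the second factor, and to carry out each ingredient of that procedure explicitly in the left trivialization $TG\cong G\times\lag$. Under this trivialization $TQ\cong TS\times G\times\lag$, the lifted action is left multiplication on the middle factor and the fundamental vector field of $\xi\in\lag$ reads $\xi_Q(s,g)=(0_s,g\,Ad_{g^{-1}}\xi)$. Left invariance of $L$ means its expression is independent of $g$, which is exactly what packages $L$ into the function $\ell$ on $TS\times\lag$; every subsequent object then has to be rewritten in terms of $\ell$.

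First I would compute the momentum map: differentiating $L$ along $\xi_Q$ and substituting $\ell$ gives $\langle J_L(v_s,g\xi),\eta\rangle=\langle\F_2\ell(v_s,\xi),Ad_{g^{-1}}\eta\rangle$, hence $J_L(v_s,g\xi)=Ad^*_{g^{-1}}\F_2\ell(v_s,\xi)$, so that $J_L^{-1}(\mu)$ is cut out by $\F_2\ell(v_s,\xi)=Ad^*_g\mu$. From this I would read off that $G$-regularity is equivalent to invertibility of $\F_2\ell(v_s,\cdot)\colon\lag\to\lag^*$ for each $v_s$, whose inverse is precisely the map $\chi^{(v_s)}$; thus the hypothesis $\F_2\ell(v_s,\chi^{(v_s)}(\nu))\equiv\nu$ assumed in the statement is exactly the $G$-regularity needed to run the procedure.

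Next I would install the standard flat connection $\A(v_s,v_g)=v_gg^{-1}$, verifying the two defining properties of a principal connection directly in the trivialization (both reduce to $Ad_gAd_{g^{-1}}=\mathrm{id}$ and the cocycle property of $Ad$). The one genuinely computational step is evaluating $d\A_{\mu}$: applying Cartan's formula to a pair of invariant vector fields and using equivariance yields $d\A_{\mu}\big((v_s,g\xi),(w_s,g\xi')\big)=\langle Ad^*_g\mu,[\xi,\xi']\rangle$. I would then identify the quotients: $G_\mu$ acts freely and properly, $Q/G_\mu=S\times(G/G_\mu)$, and $G/G_\mu\cong\Or_{\mu}$ via $[g]\mapsto Ad^*_g\mu$; under these identifications $J_L^{-1}(\mu)/G_\mu\cong TS\times\Or_{\mu}$, so the map $\Pi_\mu$ of Lemma~\ref{lem:identi} reduces to the identity. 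Projecting $d\A_{\mu}$ to $S\times\Or_{\mu}$ gives the claimed $\B_{\mu}$, recognizable as the trivially extended Kostant--Kirillov--Souriau form, and restricting $\ell-\hat{\A}_{\mu}$ to $J_L^{-1}(\mu)$ and pushing it down produces $\Ro^{\mu}(v_s,\nu)=(\ell(v_s,\xi)-\langle\nu,\xi\rangle)|_{\xi=\chi^{(v_s)}_\mu(\nu)}$.

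Finally I would extract the dynamics from the intrinsic condition $i_{\dot\gamma}\Omega^{\Ro^{\mu},\B_{\mu}}=-dE_{\Ro^{\mu}}$, splitting the test vectors into a $T\Or_{\mu}$-component and a $TS$-component. The hard part is the orbit direction: pairing $dE_{\Ro^{\mu}}$ with $(0,\dot\nu')$ produces three terms involving $\partial\chi^{(v_s)}/\partial\nu_a$, and the defining relation $\F_2\ell(v_s,\chi^{(v_s)}(\nu))=\nu$ forces the first and third to cancel, collapsing everything to $-\langle\dot\nu',\chi^{(v_s)}(\nu)\rangle$. Matching this against $\B_{\mu}(\nu)(\dot\nu,\dot\nu')$ and invoking nondegeneracy of the KKS form on $\Or_{\mu}$ gives $\dot\nu=ad^*_{\chi^{(v_s)}_\mu(\nu)}\nu$, while the $TS$-component yields the standard Euler--Lagrange equations for $\Ro^{\mu}$. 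I expect this cancellation in the orbit direction, together with the bracket computation of $d\A_{\mu}$, to be the only nonroutine steps; everything else is bookkeeping in the trivialization.
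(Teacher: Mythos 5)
Your proposal is correct and follows essentially the same route as the paper: left trivialization, computation of $J_L$ and the $G$-regularity condition in terms of $\F_2\ell$, the zero-curvature connection $\A(v_s,v_g)=v_gg^{-1}$ with its exterior derivative reducing to the KKS form on $S\times\Or_\mu$, the Routhian via $\ell-\hat{\A}_\mu$, and the cancellation in the orbit direction coming from $\F_2\ell(v_s,\chi^{(v_s)}(\nu))=\nu$. You have correctly identified the two nonroutine steps (the bracket formula for $d\A_\mu$ and the collapse of $dE_{\Ro^\mu}$ paired with orbit directions), which are exactly the computations the paper carries out.
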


A similar result holds in case we are dealing with a Lagrangian on $Q = S \times G$ which is right invariant, i.e. which is invariant under the lifted action of $\Psi^Q: G\times Q \to Q$, $(g',(s,g))\mapsto (s,gg')$. Given the appropriate function $\ell$, the reduced equation of motion in this case  will slightly differ from those obtained above: the component along $\Or_ {\mu}$ becomes $\dot \nu = -ad^*_{\chi^{(v_s)}_\mu(\nu)} \nu$.

\subsection{Example: The rigid body with a rotor} We consider a rigid body with a single rotor along the third principal axis of the body. This example is taken from~\cite{satellite}. The configuration space of this system is $Q=S^1 \times SO(3)$, where $SO(3)$ is the configuration space of the rigid body and $S^1$ measures the angle of the rotor relative to the body frame which we denote by $x$. In the body frame of the principal inertia axes, the (reduced) Lagrangian $\ell: TS^1 \times so(3) \rightarrow \R$ has the following expression:
\[
\ell(x,\dot x, \omega) = \frac12 \left(\omega \mathbb{I} \omega + (\omega+ \alpha ) \mathbb{J} (\omega+\alpha)\right), \]
where $\mathbb{I}$ and $\mathbb{J}$ are the inertia tensors corresponding to the rigid body and the rotor, respectively, $\omega=(\omega_1,\omega_2,\omega_3)$ denotes the angular velocity of the body and  $\alpha:=(0,0,\dot x)$ corresponds to the angular velocity of the rotor, both in the body frame. Introducing the quantities $\lambda_i=I_i+J_i$, $i=1,2,3$, the Lagrangian becomes explicitly
\[
\ell(x, \dot x, \omega) =  \frac12 (\lambda_1\omega_1^2 + \lambda_2\omega_2^2 + \lambda_3\omega_3^2 + J_3 \dot x^2) + J_3 \dot x \omega_3.
\]
The map $\mathbb{F}_{2}\ell(x,\dot x, \cdot):\mathbb{R}^3(\cong so(3))\to\mathbb{R}^3(\cong so^*(3))$ is given by
\[
\mathbb{F}_2 \ell (\dot x,\omega) = (\lambda_1 \omega_1,\lambda_2\omega_2,\lambda_3\omega_3+J_3 \dot x),
\]
and its inverse equals
\[
\chi^{(x,\dot{x})}(m)= \left(\frac{1}{\lambda_1}m_1,\frac{1}{\lambda_2}m_2,\frac{1}{\lambda_3}(m_3 - J_3\dot x)\right), \]
where, in the notations of the previous section, $m = (m_1,m_2,m_3)\in \R^3\cong so^*(3)$ corresponds to $\nu$ and $(x,\dot x)$ corresponds to $v_s$. For the actual computation of the Routhian, we use the property that for Lagrangians of mechanical type with potential energy $V(s)$, the Routhian can be computed from the following identity~\cite{pars}: \[
2\big(\Ro^\mu(v_s,\nu) + V(s)\big)=\left.\left(\langle \F_1\ell(v_s,\xi) , v_s\rangle - \langle \F_2\ell(v_s,\xi), \xi \rangle \right)\right|_{\xi=\chi^{(v_s)}_\mu(v_s,\nu)}.
\]
Using the above expression, the Routhian is obtained in a straightforward way:
\begin{align*}
\Ro^{m_0}(x,\dot x, m) & = \frac{1}{2}\left((J_3\dot x^2+J_3\dot x\omega_3)-\lambda_1\omega_1^2-\lambda_2\omega_2^2-\omega_3(\lambda_3\omega_3+J_3\dot x)\right)\big{|}_{\omega=\chi^{(x,\dot x)}_{m_0}(m)}\\
& = \frac{1}{2}\left(J_3\dot x^2\left(1-\frac{J_3}{\lambda_3}\right) - \frac{m_1^2}{\lambda_1} - \frac{m_2^2}{\lambda_2} - \frac{m_3^2}{\lambda_3}\right)+\frac{J_3}{\lambda_3}\dot x m_3\\
& = \frac{1}{2}\left(\frac{J_3 I_3}{\lambda_3}\dot x^2 - \frac{m_1^2}{\lambda_1} - \frac{m_2^2}{\lambda_2} - \frac{m_3^2}{\lambda_3}\right)+\frac{J_3}{\lambda_3}\dot x m_3 .
\end{align*}
Note that the difference with the Routhian obtained in~\cite{jalna}, which was computed using the mechanical connection, is the appearance of the product term $\dot x m_3$. The reduced equations on $so^*(3)$ read $\dot m = ad^*_{\chi^{(x,\dot x)}_{m_0}(m)} m = m\times\chi^{(x,\dot x)}_{m_0}(m)$. Finally, the full reduced set of equations of motion corresponding to $\Ro^{m_0}$ read:
\begin{align*}
\dot{m}_{1} & =  \left(\frac{1}{\lambda_{3}}-\frac{1}{\lambda_{2}}\right)m_{2}m_{3}-\frac{m_{2}J_{3}}{\lambda_{3}} \dot x, & \dot{m}_{2} & =  \left(\frac{1}{\lambda_{1}}-\frac{1}{\lambda_{3}}\right)m_{1}m_{3}+\frac{m_{1}J_{3}}{\lambda_{3}} \dot x,\\ \dot{m}_{3} & =  \left(\frac{1}{\lambda_{2}}-\frac{1}{\lambda_{1}}\right)m_{1}m_{2}, & I_3\ddot x  &= -\dot m_3 .
\end{align*}

\begin{remark}\textnormal{In ~\cite{CLeuler} the previous example is also treated in the context of controlled Lagrangians (see also ~\cite{CLpotential,CLmatching,CLgyr}). It would be of interest to investigate in detail the connections between the two approaches.}
\end{remark}
\section{Transformations between magnetic Lagrangian systems}\label{sec:4}

We now introduce a particular type of transformations relating two magnetic Lagrangian systems. This can be seen as a generalization of the concept of point transformations in Lagrangian mechanics and is inspired upon the techniques encountered in Routh reduction. It allows one to transform a magnetic Lagrangian system into a new magnetic Lagrangian system with an enlarged configuration space $P$, but with a greater number of `constraints' in order to compensate for the raise in degrees of freedom.

\subsection{Compatible transformations}
\begin{definition} Let $\e^{(1)}:P_1\to Q_1$ and $\e^{(2)}:P_2\to Q_2$ be two fiber bundles. If $F:P_1\to P_2$ and $f:Q_2\to Q_1$ are two surjective submersions we say that the pair $(F,f)$ forms a \emph{transformation pair} between both bundles if the following equality holds:
\[
f \circ \e^{(2)} \circ F = \e^{(1)},
\]
and all the arrows in Figure~\ref{fig:diagram1} represent fiber bundles.
\begin{figure}[htb]
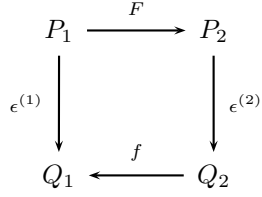
\centering
\[
\vspace{-.1cm}
\setlength{\arraycolsep}{.8cm}
\begin{array}{ccc}
\Rnode{a}{P_1} & \Rnode{b}{P_2}\\[1.5cm]
\Rnode{c}{Q_1} & \Rnode{d}{Q_2}
\end{array}
\psset{nodesep=5pt,arrows=->}
\everypsbox{\scriptstyle}
\ncLine{a}{b}\Aput{F}
\ncLine{a}{c}\Bput{\e^{(1)}}
\ncline{d}{c}\Bput{f}
\ncLine{b}{d}\Aput{\e^{(2)}}
\]
\caption{Transformation pair}\label{fig:diagram1}\end{figure}
\end{definition}

Putting $\dim Q_i = n_i$ and $\dim P_i = n_i + k_i$ for $i=1,2$,  it immediately follows from this definition that $n_1 + k_1 \geq n_2 + k_2$ and $n_1 \leq n_2$, from which we deduce that the fiber dimensions must satisfy $k_1 \geq k_2$.

A transformation pair $(F,f)$ determines a relation between the points of $T_{P_1}Q_1$ and $T_{P_2}Q_2$. \begin{definition} Given a transformation pair $(F,f)$, then two points $(v_{q_i},p_i)\in T_{P_i}Q_i$, $i=1,2$ are {\em $(F,f)$-compatible} if $F(p_1)=p_2$ and $Tf(v_{q_2})=v_{q_1}$.
\end{definition}
We denote by $Vf (\subset TQ_2)$ the subbundle of $f$-vertical tangent vectors to $Q_2$ and its dual by $V^*f$.

Suppose a transformation pair is given between the bundles $\e^{(1)}$ and $\e^{(2)}$. The projections of the fibred product $T_{P_i}Q_i$ onto $TQ_i$ and $P_i$ will be denoted by $\tau^{(i)}_1$ and $\tau^{(i)}_2$, respectively, and the projections of $T^*_{P_i}Q_i$ onto $T^*Q_i$ and $P_i$ by $\pi^{(i)}_1$ and $\pi^{(i)}_2$, respectively, for $i=1,2$.

\begin{definition} A smooth mapping $\psi: T_{P_1}Q_1 \to T_{P_2}Q_2$ is {\em compatible} with the transformation pair $(F,f)$ if the following conditions are verified: (i) $Tf \circ \tau^{(2)}_1 \circ \psi = \tau^{(1)}_1$, and (ii) $\tau^{(2)}_2 \circ \psi = F \circ \tau^{(1)}_2$.
\end{definition}

It is easily verified that a map $\psi$ is compatible with $(F,f)$ iff the image of a point and the point itself are $(F,f)$-compatible.  In particular this implies that a compatible transformation $\psi$ makes the diagrams in Figure~\ref{fig:comp} commute.
\begin{figure}[thb]
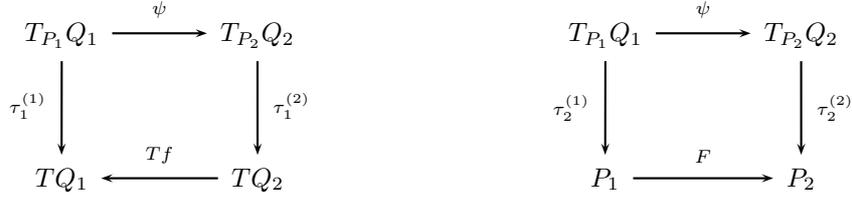

\[
\setlength{\arraycolsep}{.8cm}
\begin{array}{ccc}
\Rnode{a}{T_{P_1}Q_1} & \Rnode{b}{T_{P_2}Q_2}\\[1.5cm]
\Rnode{c}{TQ_1} & \Rnode{d}{TQ_2}
\end{array} \hspace{2cm}
\begin{array}{ccc}
\Rnode{e}{T_{P_1}Q_1} & \Rnode{f}{T_{P_2}Q_2}\\[1.5cm]
\Rnode{g}{P_1} & \Rnode{h}{P_2}
\end{array}
\psset{nodesep=5pt,arrows=->}
\everypsbox{\scriptstyle}
\ncLine{a}{b}\Aput{\psi}
\ncLine{a}{c}\Bput{\tau_1^{(1)}}
\ncLine{d}{c}\Bput{Tf}
\ncLine{b}{d}\Aput{\tau_1^{(2)}}
\ncLine{e}{f}\Aput{\psi} \ncline{f}{h}\Aput{\tau_2^{(2)}} \ncline{e}{g}\Bput{\tau_2^{(1)}} \ncline{g}{h}\Aput{F}
\]
\caption{A diagram for $(F,f)$-compatible transformations.}\label{fig:comp}
\end{figure}
It is worth deriving a coordinate expression for compatible transformations. We choose coordinate charts that are adapted to the fibrations, i.e. starting from coordinates $(q^i)$ on $Q_1$, consider coordinates $(q^i,\bar q^a)$ on $Q_2$ adapted to $f$. We denote the $\e^{(2)}$ adapted coordinates on $P_2$ by $(q^i,\bar q^a,\bar p^\alpha)$ and, finally, $(q^i,\bar q^a,\bar p^\alpha,p^\gamma)$ on $P_1$ adapted to $F$. A $(F,f)$-compatible transformation then assumes the form: \[\psi(q^i,\dot q^i, \bar q^a,\bar p^\alpha, p^\gamma) = (q^i,\bar q^a,\dot q^i,\psi^a(q^i,\dot q^i, \bar q^a,\bar p^\alpha, p^\gamma), \bar p^\alpha),\] with the only non-trivial components in velocities vertical to $f$. Remark that if a compatible transformation is a diffeomorphism, then for every $p_2\in P_2$, the fibre $F^{-1}(p_2)$ is diffeomorphic to $V_{q_2}f$, where $q_2=\e^{(2)}(p_2)$.  This assumption in particular implies that $2n_1 + k_1 = 2n_2 + k_2$, i.e.
\[
\dim T_{P_1}Q_1 = \dim T_{P_2}Q_2.
\]

\begin{remark}\textnormal{We already pointed out that an ordinary Lagrangian system is a special instance of a magnetic Lagrangian system with $P\equiv Q$ and $T_PQ\equiv TQ$. Consider a point transformation between $Q_2$ and $Q_1$, i.e., a diffeomorphism $f:Q_2\to Q_1$. Then the pair $(F=f^{-1},f)$ is a transformation pair and the tangent lift of $f^{-1}$, i.e. $\psi=Tf^{-1}$, is a compatible transformation.}
\end{remark}

\subsection{A special class of compatible transformations} We now proceed with the case where in addition to a transformation pair $(F,f)$ between $\e^{(1)}$ and $\e^{(2)}$, a hyperregular magnetic Lagrangian system  $(\e^{(2)}, L_2, \B_2)$ is given. It is our purpose to construct a class of $(F,f)$-compatible transformations by means of the Lagrangian $L_2$ such that, under suitable regularity conditions: (1) it is a diffeomorphism between $T_{P_1} Q_1$ and $T_{P_2} Q_2$, and (2) it pulls-back  the Hamiltonian vector field $X_{E_{L_2}}$ on $T_{P_2} Q_2$ to a vector field on $T_{P_1} Q_1$ which is the Hamiltonian vector field associated to a new magnetic Lagrangian system on $\e^{(1)}$. The main theorem of this section then relates the symplectic structures and the Hamiltonian dynamics associated to the magnetic Lagrangian systems on $\e^{(1)}$ and $\e^{(2)}$. The construction of the $(F,f)$-compatible transformation consists of three steps.

{\underline{Step~1}}. Consider the map $\alpha_{L_2}: T_{P_2}Q_2 \to V^*f$ which is defined as $\pi^{(2)}_1 \circ {\F L}_2: T_{P_2}Q_2 \to T^*Q_2$ composed with the projection of $T^*Q_2$ onto $V^*f$. \begin{definition} Given a transformation pair $(F,f)$ between $\e^{(1)}$ and $\e^{(2)}$ and a Lagrangian system $(\e^{(2)},L_2,\B_2)$. The Lagrangian $L_2$ is $f$-regular if for any given $(v_{q_2},p_2) \in T_{P_2}Q_2$ the map $\alpha^{(v_{q_2},p_2)}_{L_2}: V_{q_2}f \to V^*_{q_2}f, w_{q_2} \mapsto \alpha_{L_2}(v_{q_2} + w_{q_2},p_2)$ is a diffeomorphism.\end{definition}

{\underline{Step~2}}. Fix any surjective submersion $\beta: P_1 \to V^*f$ satisfying the regularity condition that $\beta|_{F^{-1}(p_2)}: F^{-1}(p_2) \to V^*_{q_2}f$ is a diffeomorphism for each $p_2 \in P_2$, with $\e^{(2)}(p_2) = q_2$. For this to hold, the dimension of the fibres of $f$ should be equal to the dimension of the fibres of $F$ which is necessary for a $(F,f)$-compatible transformation to be a diffeomorphism. 

{\underline{Step~3}}. We now define a map $\psi_{L_2,\beta}$ from $T_{P_1}Q_1$ to $T_{P_2}Q_2$ associated to a $f$-regular Lagrangian $L_2$ and a fixed $\beta$ as in the previous step. Let $(v_{q_1},p_1)$ be an arbitrary point in $T_{P_1}Q_1$  and let $(v_{q_2},p_2)\in T_{P_2}Q_2$ be compatible with it (such a point always exists).  Due to the $f$-regularity of $L_2$, there is a unique tangent vector $w_{q_2}\in V_{q_2}f$ that satisfies $\alpha^{(v_{q_2},p_2)}_{L_2}(w_{q_2}) = \beta(p_1)$, or alternatively
\[
\pi^{(2)}_1\big({\F L}_2{(v_{q_2}+w_{q_2},p_2)}\big)|_{Vf}= \beta(p_1).
\]
We take the point $(v_{q_2}+w_{q_2},p_2)$ as the image of $(v_{q_1},p_1)$ under $\psi_{L_2,\beta}$. Because $(v_{q_2},p_2)$ is compatible with $(v_{q_1},p_1)$ and because $L_2$ is $f$-regular, the construction  is independent of the choice of $v_{q_2}$. If $\beta$ satisfies the regularity condition from step 2, the map $\psi_{L_2,\beta}$ admits an inverse and is a diffeomorphism. The following proposition summarizes the above.

\begin{proposition}\label{prop:compatible} Let $L_2$ be $f$-regular and $\beta:P_1\to V^*f$ be arbitrary. Then the map $\psi_{L_2,\beta}: T_{P_1}Q_1\to T_{P_2}Q_2$ constructed above is uniquely determined from the two conditions: \begin{enumerate}
\item $\psi_{L_2,\beta}$ is a $(F,f)$-compatible transformation;
\item $\pi^{(2)}_1\big({\F L}_2{(\psi_{L_2,\beta}(v_{q_1},p_1))}\big)|_{Vf}= \beta(p_1)$ for arbitrary $(v_{q_1},p_1) \in T_{P_1}Q_1$.
\end{enumerate} If $\beta$ satisfies the regularity condition from Step~2, then $\psi_{L_2,\beta}$ is a diffeomorphism.
\end{proposition}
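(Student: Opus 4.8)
The plan is to establish, in order, that the prescription of Step~3 is well posed, that the resulting map $\psi_{L_2,\beta}$ satisfies the two listed conditions, that these conditions pin it down uniquely, and finally that the Step~2 regularity on $\beta$ upgrades it to a diffeomorphism. Throughout I would rely on the remark preceding the proposition, namely that a map is $(F,f)$-compatible exactly when it sends each point to an $(F,f)$-compatible point.

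First I would check well-definedness, i.e.\ independence of the auxiliary compatible lift $(v_{q_2},p_2)$. Given $(v_{q_1},p_1)\in T_{P_1}Q_1$ one is forced to take $p_2:=F(p_1)$ and $q_2:=\e^{(2)}(p_2)$; since $f\circ\e^{(2)}\circ F=\e^{(1)}$ we get $f(q_2)=q_1:=\e^{(1)}(p_1)$, and as $f$ is a submersion $Tf:T_{q_2}Q_2\to T_{q_1}Q_1$ is onto, so a lift $v_{q_2}$ with $Tf(v_{q_2})=v_{q_1}$ exists. If $v_{q_2},v'_{q_2}$ are two such lifts then $u:=v'_{q_2}-v_{q_2}\in\ker Tf=V_{q_2}f$; writing $w_{q_2},w'_{q_2}\in V_{q_2}f$ for the vectors produced by $f$-regularity, both $w_{q_2}$ and $u+w'_{q_2}$ lie in $V_{q_2}f$ and
\[
\alpha^{(v_{q_2},p_2)}_{L_2}(u+w'_{q_2})=\alpha_{L_2}(v'_{q_2}+w'_{q_2},p_2)=\beta(p_1)=\alpha^{(v_{q_2},p_2)}_{L_2}(w_{q_2}),
\]
so injectivity of $\alpha^{(v_{q_2},p_2)}_{L_2}$ forces $w_{q_2}=u+w'_{q_2}$, whence $v_{q_2}+w_{q_2}=v'_{q_2}+w'_{q_2}$. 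This is the one place where $f$-regularity is essential and is the conceptual crux of the argument.

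Next I would verify conditions~1 and~2 together with uniqueness, all of which are now immediate. Compatibility holds because $F(p_1)=p_2$ by construction and $Tf(v_{q_2}+w_{q_2})=Tf(v_{q_2})=v_{q_1}$ since $w_{q_2}\in\ker Tf$, which by the quoted remark says $\psi_{L_2,\beta}$ is an $(F,f)$-compatible transformation. Condition~2 holds because $\alpha_{L_2}(v_{q_2}+w_{q_2},p_2)=\alpha^{(v_{q_2},p_2)}_{L_2}(w_{q_2})=\beta(p_1)$ is exactly $\pi^{(2)}_1(\F L_2(\psi_{L_2,\beta}(v_{q_1},p_1)))|_{Vf}$. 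For uniqueness, let $\psi$ satisfy~1 and~2; condition~1 forces $\psi(v_{q_1},p_1)=(\tilde v_{q_2},p_2)$ with $p_2=F(p_1)$ and $Tf(\tilde v_{q_2})=v_{q_1}$, so $\tilde v_{q_2}=v_{q_2}+w$ for some $w\in V_{q_2}f$, and condition~2 gives $\alpha^{(v_{q_2},p_2)}_{L_2}(w)=\beta(p_1)$; uniqueness in the definition of $f$-regularity then identifies $w$ with $w_{q_2}$, so $\psi=\psi_{L_2,\beta}$.

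Finally, under the Step~2 regularity I would exhibit a two-sided smooth inverse. Given $(v_{q_2},p_2)\in T_{P_2}Q_2$, set $v_{q_1}:=Tf(v_{q_2})$ and use the fibrewise diffeomorphism $\beta|_{F^{-1}(p_2)}:F^{-1}(p_2)\to V^*_{q_2}f$ to pick the unique $p_1\in F^{-1}(p_2)$ with $\beta(p_1)=\alpha_{L_2}(v_{q_2},p_2)$; the assignment $(v_{q_2},p_2)\mapsto(v_{q_1},p_1)$ is the candidate inverse. Feeding $(v_{q_1},p_1)$ back and choosing $v_{q_2}$ itself as the compatible lift, the defining equation $\alpha^{(v_{q_2},p_2)}_{L_2}(w)=\beta(p_1)=\alpha_{L_2}(v_{q_2},p_2)$ has solution $w=0$, so $\psi_{L_2,\beta}(v_{q_1},p_1)=(v_{q_2},p_2)$; the reverse composition is the identity by the uniqueness just established. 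Both maps are built from $Tf$, $\F L_2$, and the inverses of $\alpha^{(v_{q_2},p_2)}_{L_2}$ and $\beta|_{F^{-1}(p_2)}$, so I would promote smoothness by applying the inverse/implicit function theorem with parameters to these defining relations. I expect the genuine technical obstacle to lie precisely here: the bookkeeping needed to see that the fibrewise inverses assemble into globally smooth maps, the diffeomorphism claim being otherwise formal.
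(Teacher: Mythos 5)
Your proof is correct and takes essentially the same approach as the paper, which in fact offers no separate proof beyond the Step 1--3 construction (the proposition, in the paper's words, merely ``summarizes the above''). Your added details --- the well-definedness argument via translation by $u \in V_{q_2}f$ and injectivity of $\alpha^{(v_{q_2},p_2)}_{L_2}$, the uniqueness check, and the explicit fibrewise inverse built from $\beta|_{F^{-1}(p_2)}$ --- correctly supply exactly the verifications the paper leaves as assertions.
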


A $(F,f)$-compatible transformation of the form $\psi_{L_2,\beta}$ can be used to pull-back the symplectic 2-form and the energy of the magnetic Lagrangian system $(\e^{(2)},L_2,\B_2)$ to the manifold $T_{P_1}Q_1$. We show in Theorem~\ref{thm:main} that a magnetic Lagrangian system on $\e^{(1)}$ exists whose associated symplectic 2-form is precisely $\psi^*_{L_2,\beta}\Omega^{L_2,\B_2}$ and whose energy is $\psi^*_{L_2,\beta} E_{L_2}$.

For the definition of this magnetic Lagrangian system on $\e^{(1)}$, we consider a connection $\A$ on the bundle $f: Q_2 \to Q_1$. Recall that $\A$ may be represented as a $Vf$-valued 1-form on $Q_2$, satisfying $\A(v_{q_2})=v_{q_2} $, for all $v_{q_2}\in Vf$. It induces an splitting $TQ_2=Hf\oplus Vf$, where $Hf=\ker\A$ is the  horizontal subbundle of $TQ_2$ defined by $\A$. It is standard to denote the horizontal and vertical components of a tangent vector $v_{q_2}$ by $v^H_{q_2}$ and $v^V_{q_2}=\A(v_{q_2})$ respectively.

We will further use the notation $\A_{P_1}$ to denote the $Vf$-valued 1-form on $P_1$ induced by $\A$, i.e. the vertical part of the projection of a tangent vector to $P_1$ onto $Q_2$ via the tangent map of $\e^{(2)} \circ F$. Contraction of $\beta$ and $\A_{P_1}$ gives rise to a 1-form on $P_1$, namely \[ \langle\beta, \A_{P_1}\rangle (p_1) = \langle\beta(p_1),\A_{P_1}(p_1)\rangle \in T^*_{p_1}P_1\,. \]

Finally, we denote the $TQ_2$-component of the transformation $\psi_{L_2,\beta}: T_{P_1}Q_1 \to T_{P_2}Q_2$, i.e. the projection $\tau^{(2)}_1 \circ \psi_{L_2,\beta}$, by $\psi_{L_2,\beta}^{TQ_2}$. 

Then we can prove the following important result.

\begin{theorem}\label{thm:main} Let $(F,f)$ be a transformation pair between $\e^{(1)}:P_1\to Q_1$ and $\e^{(2)}:P_2\to Q_2$ and let $(\e^{(2)},L_2,\B_2)$ be a magnetic Lagrangian systems such that $L_2$ is $f$-regular. Fix a map $\beta$ as in Step~2 and let $\psi_{L_2,\beta}: T_{P_1}Q_1 \to T_{P_2}Q_2$ be the $(F,f)$-compatible diffeomorphism. Consider the magnetic Lagrangian system $(\e^{(1)},L_1,\B_1)$ defined by
\begin{enumerate}
\item $L_1(v_{q_1},p_1)=\left(\psi_{L_2,\beta}^*L_2\right)  (v_{q_1},p_1) -  \langle \beta
(p_1),\A(\psi_{L_2,\beta}^{TQ_2}(v_{q_1},p_1))\rangle$;
\item $\B_1= F^*\B_2 + d\left(\langle \beta , \A_{P_1}\rangle \right)$.
\end{enumerate}
Then $\psi_{L_2,\beta}$ is a symplectomorphism between the two symplectic structures associated with the two magnetic Lagrangian systems $(\e^{(1)}: P_1\to Q_1,L_1,\B_1)$ and $(\e^{(2)}: P_2\to Q_2,L_2,\B_2)$, and the corresponding Hamiltonian vector fields $X_{E_{L_1}}$ and $X_{E_{L_2}}$ are $\psi_{L_2,\beta}$-related.
\end{theorem}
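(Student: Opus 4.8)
The plan is to reduce the statement to two pullback identities and then invoke a standard fact about symplectomorphisms. Concretely, I would first prove that $\psi_{L_2,\beta}^*\Omega^{L_2,\B_2}=\Omega^{L_1,\B_1}$ and, separately, that $E_{L_1}=\psi_{L_2,\beta}^*E_{L_2}$. Once both hold, the relatedness of $X_{E_{L_1}}$ and $X_{E_{L_2}}$ is automatic: pulling back $i_{X_{E_{L_2}}}\Omega^{L_2,\B_2}=-dE_{L_2}$ along the diffeomorphism $\psi_{L_2,\beta}$ and using $\psi_{L_2,\beta}^*dE_{L_2}=dE_{L_1}$ together with $\psi_{L_2,\beta}^*\Omega^{L_2,\B_2}=\Omega^{L_1,\B_1}$ and nondegeneracy yields $X_{E_{L_1}}=\psi_{L_2,\beta}^*X_{E_{L_2}}$, i.e.\ the two fields are $\psi_{L_2,\beta}$-related. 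Two useful by-products should be recorded here: $\B_1$ is closed (since $\B_2$ is and the added term is exact), so $(\e^{(1)},L_1,\B_1)$ is indeed a magnetic Lagrangian system; and since $\Omega^{L_2,\B_2}$ is symplectic under the standing hyperregularity assumption, $\Omega^{L_1,\B_1}=\psi_{L_2,\beta}^*\Omega^{L_2,\B_2}$ is automatically symplectic, so no separate regularity check for $L_1$ is needed.

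For the symplectic identity I would split the two-form. Writing $\Theta_{L}:=\F L^*\pi_1^*\theta_{Q}$ for the Poincar\'e--Cartan one-form and using $\pi_2\circ\F L=\tau_2$, one gets the clean decomposition $\Omega^{L,\B}=d\Theta_{L}+\tau_2^*\B$ for either system. The magnetic piece is controlled by the transformation pair: compatibility condition (ii), $\tau_2^{(2)}\circ\psi_{L_2,\beta}=F\circ\tau_2^{(1)}$, gives $\psi_{L_2,\beta}^*\tau_2^{(2)*}\B_2=\tau_2^{(1)*}F^*\B_2$, while the definition of $\B_1$ gives $\tau_2^{(1)*}\B_1=\tau_2^{(1)*}F^*\B_2+d\big(\tau_2^{(1)*}\langle\beta,\A_{P_1}\rangle\big)$. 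Everything therefore reduces to the single one-form identity
\[
\psi_{L_2,\beta}^*\Theta_{L_2}=\Theta_{L_1}+\tau_2^{(1)*}\langle\beta,\A_{P_1}\rangle,
\]
whose exterior derivative supplies exactly the leftover exact term needed to reconcile the canonical pieces $d\Theta_{L_2}$ and $d\Theta_{L_1}$.

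Proving this one-form identity is the main obstacle, and it is here that the defining property of $\psi_{L_2,\beta}$ does the work. In the fibred coordinates introduced after Figure~\ref{fig:comp} one has $\Theta_{L_2}=\fpd{L_2}{\dot q^i}dq^i+\fpd{L_2}{\dot{\bar q}^a}d\bar q^a$ and $\Theta_{L_1}=\fpd{L_1}{\dot q^i}dq^i$. The second condition of Proposition~\ref{prop:compatible}, namely $\pi^{(2)}_1(\F L_2(\psi_{L_2,\beta}(v_{q_1},p_1)))|_{Vf}=\beta(p_1)$, says precisely that $\fpd{L_2}{\dot{\bar q}^a}\circ\psi_{L_2,\beta}=\beta_a$. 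Feeding this and the coordinate form $\psi_{L_2,\beta}(q^i,\dot q^i,\bar q^a,\bar p^\alpha,p^\gamma)=(q^i,\bar q^a,\dot q^i,\psi^a,\bar p^\alpha)$ into the definition of $L_1$, the chain-rule contribution $\beta_a\,\partial\psi^a/\partial\dot q^i$ coming from $\psi_{L_2,\beta}^*L_2$ cancels against the matching term in $\langle\beta,\A(\psi_{L_2,\beta}^{TQ_2})\rangle$, leaving $\fpd{L_1}{\dot q^i}=\fpd{L_2}{\dot q^i}\circ\psi_{L_2,\beta}-\beta_aA^a_i$, where $A^a_i$ are the coefficients of $\A$. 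Comparing the two one-forms then gives $\psi_{L_2,\beta}^*\Theta_{L_2}-\Theta_{L_1}=\beta_a(d\bar q^a+A^a_idq^i)$, which is exactly $\tau_2^{(1)*}\langle\beta,\A_{P_1}\rangle$, since $\A_{P_1}$ is the $Vf$-valued form with local expression $(d\bar q^a+A^a_idq^i)\otimes\partial/\partial\bar q^a$. The same identity can be established intrinsically, but the coordinate computation makes the telescoping transparent.

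Finally, the energy identity follows from the same bookkeeping. Writing $E_L=\fpd{L}{v^i}v^i-L$ in each case and substituting $\dot{\bar q}^a\circ\psi_{L_2,\beta}=\psi^a$, $\fpd{L_2}{\dot{\bar q}^a}\circ\psi_{L_2,\beta}=\beta_a$ and the expression for $\fpd{L_1}{\dot q^i}$ just derived, the connection terms $\beta_aA^a_i\dot q^i$ cancel and the fibre-velocity contribution $\beta_a\psi^a$ is reproduced, so that $E_{L_1}=\psi_{L_2,\beta}^*E_{L_2}$. Combined with the symplectic identity, this closes the argument. I expect the only genuine difficulty to be the one-form identity of the second step; everything else is formal manipulation of pullbacks, driven by the two structural choices in $(\e^{(1)},L_1,\B_1)$, which are reverse-engineered precisely so that the leftover terms cancel.
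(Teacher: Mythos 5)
Your proposal is correct, and its computational engine is the same as the paper's: work in coordinates adapted to the chain $P_1\to P_2\to Q_2\to Q_1$, use the momentum condition of Proposition~\ref{prop:compatible} in the form $\fpd{L_2}{\dot{\bar q}^a}\circ\psi_{L_2,\beta}=\beta_a$, and let the chain-rule terms $\beta_a\,\partial\psi^a/\partial\dot q^i$ cancel so that $\fpd{L_1}{\dot q^i}=\fpd{L_2}{\dot q^i}\circ\psi_{L_2,\beta}-\beta_a\Gamma^a_i$; the energy identity is the same bookkeeping in both treatments, and both conclude relatedness of the Hamiltonian vector fields formally from the two pullback identities together with nondegeneracy. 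Where you genuinely differ is in the packaging of the symplectic identity: the paper pulls back the $2$-form $\Omega^{L_2,\B_2}$ directly and matches it term by term against $\Omega^{L_1,\B_1}$, whereas you first split $\Omega^{L,\B}=d\Theta_L+\tau_2^*\B$ (using $\pi_2\circ\F L=\tau_2$) and reduce everything to the one-form identity $\psi_{L_2,\beta}^*\Theta_{L_2}=\Theta_{L_1}+\tau_2^{(1)*}\langle\beta,\A_{P_1}\rangle$, of which the $2$-form statement is the exterior derivative plus the magnetic pieces. This is a real refinement: the one-form identity is strictly stronger than equality of the $2$-forms, it makes manifest why the discrepancy between the canonical pieces is exact (hence why $\B_1$ must contain precisely the term $d\langle\beta,\A_{P_1}\rangle$), and it exposes the ``reverse-engineering'' of $L_1$ and $\B_1$ conceptually rather than by inspection of a finished computation. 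What the paper's direct route buys is brevity: a single displayed pullback of the $2$-form suffices, without introducing the Poincar\'e--Cartan one-forms $\Theta_{L_i}$ as intermediate objects. Your two side remarks --- closedness of $\B_1$, and the fact that symplecticity of $\Omega^{L_1,\B_1}$ is inherited through the diffeomorphism rather than needing a separate regularity check for $L_1$ --- are also correct and are left implicit in the paper.
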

\begin{proof}
The transformation pair $(F,f)$ induces a chain of bundle structures:
 \[
 P_1 \stackrel{F}{\to} P_2 \stackrel{\e^{(2)}}{\to} Q_2 \stackrel{f}{\to} Q_1.
 \]
As before, we choose coordinate charts that are adapted to these fibrations. The map $\psi_{L_2,\beta}$ has only nontrivial components in $\dot {\bar q}^a=\psi^a_{L_2,\beta}(q,\bar q, \dot q,\bar p,p)$. The map $\beta$ in coordinates reads $\beta(q,\bar q,\bar p,p) = \beta_a (q,\bar q,\bar p,p) d\bar q^a$ and $\Gamma^a_i(q,\bar q)$ denote the connection coefficients of $\A$, i.e.
 \[
 \A=(d\bar q^a + \Gamma^a_i dq^i) \otimes \fpd{}{\bar q^a} ,
 \]
and the vertical component of $v_{q_2}=(q^i,\bar q^a,\dot q^i,\dot{\bar q}^a)$ at $q_2=(q^i,\bar q^a)$ is $v^V_{q_2}=(q^i,\bar q^a,0,
\dot{\bar q}^a +\Gamma^a_j(q,\bar q)\dot q^j)$.

From the definition of $\psi_{L_2,\beta}$ we have the following identities:
\[
\frac{\partial L_2}{\partial \dot{\bar q}^a}(q,\bar q, \dot q, \psi_{L_2,\beta}(q,\dot q,\bar q,\bar p,p),\bar p) = \beta_a(q,\bar q,\bar p,p).
\]
The Lagrangian $L_1$ and the magnetic form $\B_1$ are then written as:
\begin{align*}
L_1(q,\bar q,\dot q, \bar p,p) &=L_2(q,\bar q,\dot q, \psi_{L_2,\beta}(q,\bar q,\dot q,\bar p,p), \bar p) - \beta_a(q,\bar q,\bar p,p) \big(\psi^a_{L_2,\beta}(q,\bar q,\dot q,\bar p,p) + \Gamma^a_i(q,\bar q)\dot q^i\big),\\
 \B_1 & =F^*\B_2 + d\bigg(\beta_a(q,\bar q,\bar p,p)(d\bar q^a + \Gamma_i^a(q,\bar q) dq^i) \bigg).
\end{align*}
The fact that $\psi_{L_2,\beta}$ is a symplectic diffeomorphism follows from a straightforward coordinate computation:
\begin{align*} \psi_{L_2,\beta}^*(\Omega^{L_2,\B_2}) &= \psi_{L_2,\beta}^*\left(d\left(\frac{\partial L_2}{\partial \dot q^ i}\right) \wedge dq^i +
d\left(\frac{\partial L_2}{\partial \dot{\bar q}^a}\right) \wedge d\bar q^a +\B_2\right) \\
&= d\left(\frac{\partial L_1}{\partial \dot q^ i}\right) \wedge dq^i + F^*\B_2 +d\bigg(\beta_a(q,\bar q,\bar p,p)(d\bar q^a + \Gamma_i^a(q,\bar q) dq^i) \bigg),
\end{align*}
i.e. $\psi_{L_2}^*(\Omega^{L_2,\B_2})=\Omega^{L_1,\B_1}$. It now remains to check that $\psi_{L_2,\beta}^*E_{L_2}=E_{L_1}$:
\begin{align*}
\psi_{L_2,\beta}^*E_{L_2} &= \psi_{L_2,\beta}^*\left(\dot q^i \frac{\partial L_2}{\partial \dot q^ i} + \dot{\bar q}^a \frac{\partial L_2}{\partial \dot{\bar q}^a}\right) - \psi_{L_2,\beta}^*L_2\\ &=
\dot q^i \frac{\partial L_1}{\partial \dot q^ i} -\left(\psi_{L_2,\beta}^*L_2 - \beta_a \Gamma^a_i \dot q^ i - \beta_a \psi_{L_2,\beta}^a \right).
\end{align*}
and the last term on the right-hand side is precisely $L_1$. To conclude, we remark that the two Hamiltonian vector fields $X_{E_{L_1}}$ and $X_{E_{L_2}}$ are $\psi_{L_2,\beta}$-related since $\psi_{L_2,\beta}$ is a diffeomorphism, $\psi_{L_2,\beta}^*(\Omega^{L_2,\B_2})=\Omega^{L_1,\B_1}$ and $\psi_{L_2,\beta}^*E_{L_2}=E_{L_1}$.  This completes the proof.
\end{proof}

\begin{remark}\textnormal{The fact that the two Hamiltonian vector fields are $\psi_{L_2,\beta}$-related, implies that every solution $p_1(t)\in P_1$ to the Euler-Lagrange equations for the system $(\e^{(1)},L_1,\B_1)$ projects under $F$ to a solution $p_2(t)=F(p_1(t))\in P_2$ of the Euler-Lagrange equations for the system $(\e^{(2)},L_2,\B_2)$.}\end{remark}

\begin{remark}\textnormal{The definition of the new Lagrangian $L_1$ and the magnetic 2-form $\B_1$ closely resembles the definition of the Routhian and magnetic 2-form in standard Routh reduction. In future work, we will show that the standard Routh reduction procedure itself can be regarded as the pull-back of the unreduced system with invariant Lagrangian $L$ under a transformation of the form $\psi_{L,\beta}$.}
\end{remark}

\section{Reduction by stages}\label{sec:red stages}
\subsection{Lagrangian systems on semi-direct products}
\paragraph{Generalities on semi-direct products.} Suppose we have a representation of a Lie group $G$ on a vector space $V$ and we write $gv$ for the action of $g \in G$ on $v \in V$. Likewise, the corresponding representation of the Lie algebra $\lag$ on $V$ is written as $\lag \times V \to V, (\xi,v) \mapsto \xi v$. The dual representation of $G$ on $V^*$ is defined by $G \times V^* \to V^*, (g,a) \mapsto g^*a$, where $\langle g^*a,v \rangle = \langle a ,gv \rangle$, for arbitrary $v \in V$. We then denote the semi-direct product of $G$ and $V$ by $GV=G\ltimes V$. The group multiplication of two elements $(g_1,v_1)$ and $(g_2,v_2)$ in $GV$ is
\[
(g_1,v_1)(g_2,v_2) = (g_1g_2,v_1+g_1v_2),
\]
the inverse of $(g,v)$ is $(g^{-1},-g^{-1}v)$. The Lie algebra $\lag V$ of $GV$ is the semi-direct product algebra $\lag\ltimes V$. The adjoint action of $GV$ on its Lie algebra is
\[
Ad_{(g,v)}(\xi,u) = (Ad_{g}\xi , gu - (Ad_g\xi)v),
\]
for $(g,v)\in GV$ and $(\xi,u)\in\lag V$ arbitrary. The Lie bracket of $(\xi_1,u_1)$ and $(\xi_2,u_2)$ in $\lag V$ equals
\[
[(\xi_1,u_1),(\xi_2,u_2)]_{\lag V} = \left([\xi_1,\xi_2]_\lag, \xi_1u_2-\xi_2u_1\right).
\]
The dual space of $\lag V$ is given by
\[
(\lag V)^* = \{(\mu, a)\,|\, \mu \in \lag^*, a \in V^*\}\,.
\]
The coadjoint action of $(g,v)\in GV$ on  $(\mu,a)\in (\lag V)^*$ is
 \[
Ad_{(g,v)}^*(\mu,a) = (Ad_{g}^*(\mu - v^*(a))  , g^*a)= (Ad_{g}^*\mu - (g^{-1}v)^*(g^*a), g^*a)\,,
\]
where  $v^*: V^*\to \lag^*$ is defined by $\langle v^*(a),\xi\rangle = \langle a,\xi v \rangle$.  It is a right action.

The closed subgroup $(e_G,V)$ of $GV$ is normal and $GV/(e_G,V)=G$. Similarly, $\lag V/(0,V) = \lag$. The isotropy subgroup of $(\mu,a)\in (\lag V)^*$ w.r.t. the coadjoint action is
\[
(GV)_{(\mu,a)} = \{ (g,v)\in GV\, | \,g^*a = a \mbox{ and } Ad_g^*(\mu -v^*(a))=\mu\}\,,
\]
from which it follows that $(g,v)\in (GV)_{(\mu,a)}$ in particular implies that $g \in G_a$, where $G_a$ is the isotropy subgroup of $a \in V^*$ under the action of $G$. In the specific case that the isotropy subgroup $G_a$ is trivial, i.e. $G_a = \{e_G\}$ with $e_G$ the unit element of $G$, any element in the isotropy group $(GV)_{(\mu,a)}$ is of the form $(e_G, v)$  with $v^*(a)=0$. In this case $(GV)_{(\mu,a)} $ determines a subgroup of the abelian group $V$. Throughout the following, we assume that $G_a=\{e_G\}$.

We now turn to a Lagrangian system $(Q,L)$ with configuration space $Q = S \times GV$ whereby the Lagrangian $L$ is supposed to be invariant under the (lifted) action of $GV$ onto the second factor. As we will see we can perform a Routh reduction with respect to the full semi-direct product group, or with respect to its abelian subgroup $V$. Both reduced systems are Lagrangian magnetic systems, and will be equivalent in the sense of Theorem~\ref{thm:main}.

\paragraph{$GV$-regularity of a $GV$-invariant Lagrangian $L$.} Following the definitions in the previous section, the Lagrangian $L$ determines a function $\ell: TS \times \lag V \to \R, (v_s, (\xi,u)) \mapsto \ell(v_s,\xi,u)$. Fix an element  $(\mu,a)\in (\lag V)^*$. The momentum relation $J_L\left(v_s,(g,v)(\xi,u)\right) = (\mu,a)$ is equivalent to the equations
\begin{align*}
&\F_2 \ell (v_s,\xi,u) = Ad_g^*(\mu-v^*a)\,,\\
&\F_3\ell(v_s,\xi,u)= g^*a\,,
\end{align*}
where $\F_2\ell$ and $\F_3\ell$ correspond to the fiber derivatives of $\ell$ with respect to the second and third argument, respectively.
The Lagrangian is $GV$-regular if for each $\left(v_s,(g,v)(\xi,u)\right) \in TS \times GV \times \lag V$ the map $\lag V \to (\lag V)^*, (\eta,w) \mapsto J_L\left(v_s,(g,v)(\xi + \eta, u + w)\right)$ is bijective. This translates into the existence, for each fixed $v_s \in TS$, of a mapping $(\chi_1,\chi_2): TS \times (\lag V)^* \to \lag V$ such that, for arbitrary $(v_s,(\nu,b)) \in TS \times (\lag V)^* $,
\begin{align*}
&\F_2 \ell \left(v_s, \chi_1(v_s, \nu,b),\chi_2(v_s, \nu,b)\right) = \nu\,,\\
&\F_3\ell\left(v_s, \chi_1(v_s,\nu,b),\chi_2(v_s,\nu,b)\right)= b\,,
\end{align*}

Throughout this section we assume moreover that a map $\tau: TS \times \lag \times V^* \to V$ exists such that
\[
\F_3\ell\left(v_s, \xi,\tau(v_s,\xi,b)\right)=b,
\]
for arbitrary $(v_s,\xi,b) \in TS \times \lag \times V^*$. From the $GV$-regularity it then follows that the condition $\F_2\ell\left(v_s, \xi, \tau(v_s,\xi,b)\right)=\nu$ is equivalent to $\xi=\chi_1(v_s,\nu,b)$ and, additionally, $\tau(v_s,\chi_1(v_s,\nu,b),b)=\chi_2(v_s,\nu,b)$.

We now show that the existence of the function $\tau$ is equivalent to $L$ being $V$-regular (where $V$ is identified with the abelian subgroup $\{e_G\} \times V$  of $GV$).  Recall that $L$ is $V$-regular if for every $(v_s,(g,v)(\xi,u))\in T(S \times GV) \cong TS \times GV \times \lag V$, and each $b\in V^*$, there is a unique $u'\in V$ such that
\[
\langle b,w \rangle =\left.\frac{d}{d\e} \right|_{\e=0} L\left(v_s, (g,v)(\xi,u +u' + \e g^{-1}w)\right) = \langle \F_3\ell(v_s,\xi,u+u'), g^{-1}w\rangle, \;\;\;\; \mbox{for all}\; w \in V.
\]
More specifically, for $(v_s,\xi) \in TS \times \lag$ arbitrary, there exists for every $b \in V^*$ a unique $u$ such that  $\F_3\ell(v_s,\xi,u)=b$. This precisely coincides with the existence of the map $\tau$.

\paragraph{Routh reduction w.r.t. $GV$.} Fix a local coordinate chart $(x^i)$ on $S$ and choose a regular momentum value $(\mu,a)\in (\lag V)^*$. We will write $\hat{\chi}_1(x,\dot{x},\cdot,)$ and $\hat{\chi}_2(x,\dot{x},\cdot)$ for the restrictions of the maps $\chi_1(x,\dot{x},\cdot)$ and $\chi_2(x,\dot{x},\cdot)$, respectively, to the coadjoint orbit $\Or_{(\mu,a)} \subset (\lag V)^*$ of $(\mu,a)$. According to what we found in Section~\ref{sec:3}, the reduced Euler-Lagrange equations of motion then become
\begin{align*}
& \dot \nu = ad^*_{\chi_1(x,\dot{x},\nu,b)}\nu -\left(\chi_2(x,\dot{x},\nu,b)\right)^*b, \\ & \dot b= \left(\chi_1(x,\dot{x},\nu,b)\right)^*b,\\
& \frac{d}{dt}\left( \fpd{\Ro^{(\mu,a)}_1}{\dot x^i}(x, \dot x, \nu,b)\right) - \fpd{\Ro^{(\mu,a)}_1}{x^i}(x, \dot x, \nu,b)=0,
\end{align*}
where the Routhian $\Ro^{(\mu,a)}_1$ is the function on $TS \times \Or_{(\mu,a)}$ given by
\[
\Ro^{(\mu,a)}_1(x,\dot{x},\nu,b)= \ell\left(v_s,\hat{\chi}_1(x,\dot{x},\nu,b),\hat{\chi}_2(x,\dot{x},\nu,b)\right) - \langle \nu , \hat{\chi}_1(x,\dot{x},\nu,b)\rangle  - \langle b, \hat{\chi}_2(x,\dot{x},\nu,b)\rangle\,.
\]

For later use we now compute the magnetic 2-form $\B_{(\mu,a)}$ explicitly. It is the trivial extension (i.e. the pull-back) to $S \times \Or_{(\mu,a)}$ of the reduction to $\Or_{(\mu,a)}\cong GV/GV_{(\mu,a)}$ of the exterior derivative of the 1-form $\A^1_{(\mu,a)}$ on $GV$ which is defined by
\begin{eqnarray*}
\A^1_{(\mu,a)} (g,v)\Big((g,v)(\xi,u)\Big) & = &  \langle \mu ,Ad_g\xi \rangle +\langle a, gu- (Ad_g\xi)v\rangle \\
& = & \langle Ad^*_g (\mu-v^*(a)), \xi\rangle + \langle g^*a,u\rangle.
\end{eqnarray*}
First, we introduce the following definition.
\begin{definition}\label{def:oneform}
$\theta_{(\mu,a)}$ is the 1-form on $\Or_{(\mu,a)}$ that satisfies
\[
\theta_{(\mu,a)}(\nu,b) (\dot \nu,\dot b) = \langle \nu, \xi\rangle,
\] with $(\dot \nu,\dot b = \xi^* b) \in T_{(\nu,b=g^*a)} \Or_{(\mu,a)} \subset (\lag V)^*$ arbitrary.
\end{definition}
This is well defined: by assumption, $G_{a}$ is trivial for all $a \in V^*$, hence there is a unique $\xi\in \lag$ such that $\dot b = \xi^* b$. Then we can prove:

\begin{lemma}\label{lemma:forceterms}
 $\B_{(\mu,a)}=d\theta_{(\mu,a)}.$
\end{lemma}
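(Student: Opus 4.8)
The plan is to show that the magnetic 2-form $\B_{(\mu,a)}$, which by construction is the projection to $\Or_{(\mu,a)}$ of $d\A^1_{(\mu,a)}$, coincides with $d\theta_{(\mu,a)}$. Since exterior derivative commutes with pullback and projection in the relevant sense, it suffices to work at the level of the 1-forms and relate $\A^1_{(\mu,a)}$ on $GV$ to $\theta_{(\mu,a)}$ on the orbit $\Or_{(\mu,a)}\cong GV/GV_{(\mu,a)}$. Concretely, I would first verify that $\A^1_{(\mu,a)}$ is the connection 1-form $\A_{(\mu,a)}$ associated with the standard (trivial) connection on $GV\to GV/GV_{(\mu,a)}$, evaluated at the momentum $(\mu,a)$; the second displayed expression $\langle Ad^*_g(\mu-v^*(a)),\xi\rangle+\langle g^*a,u\rangle$ already exhibits it as $\langle Ad^*_{(g,v)}(\mu,a),(\xi,u)\rangle$ paired against the Lie-algebra representative of the tangent vector, which is exactly the form $\A_\nu$ from the general theory specialized to $GV$.

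Next I would set up the projection explicitly. Using the orbit diffeomorphism $GV/GV_{(\mu,a)}\to\Or_{(\mu,a)}$, $[(g,v)]\mapsto Ad^*_{(g,v)}(\mu,a)=(\nu,b)$ with $b=g^*a$, a tangent vector to the orbit at $(\nu,b)$ has the form $(\dot\nu,\dot b)=-ad^*_{(\xi,u)}(\nu,b)$ for some $(\xi,u)\in\lag V$ (sign conventions to be tracked, since the coadjoint action here is a right action). The key computation is to read off the second component: from the bracket $[(\xi,u),(\xi',u')]_{\lag V}=([\xi,\xi']_\lag,\xi u'-\xi'u)$ one finds that the $V^*$-component of $ad^*_{(\xi,u)}(\nu,b)$ depends only on $\xi$, giving $\dot b=\pm\xi^* b$. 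This is precisely the defining relation in Definition~\ref{def:oneform}, and because $G_a$ is trivial the element $\xi\in\lag$ is uniquely determined by $\dot b$. I would then evaluate the projected 1-form on such a tangent vector and check it equals $\langle\nu,\xi\rangle=\theta_{(\mu,a)}(\nu,b)(\dot\nu,\dot b)$, confirming that $\theta_{(\mu,a)}$ is the reduction of $\A^1_{(\mu,a)}$.

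With the two 1-forms identified after projection, the conclusion $\B_{(\mu,a)}=d\theta_{(\mu,a)}$ follows by taking exterior derivatives, exactly as in the computation of $\B_\mu$ in Section~\ref{sec:3} where $\B_\mu$ was shown to be the projection of $d\A_\mu$. The main obstacle I anticipate is bookkeeping: I must verify that $\A^1_{(\mu,a)}$ genuinely descends (i.e. is $GV_{(\mu,a)}$-invariant and annihilates vectors tangent to the $GV_{(\mu,a)}$-fibres) so that its reduction is well defined, and I must confirm that $\theta_{(\mu,a)}$ is well defined as claimed — both hinge on the standing assumption $G_a=\{e_G\}$, which guarantees uniqueness of $\xi$ and controls the isotropy group $(GV)_{(\mu,a)}$. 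The sign in $\dot b=\xi^* b$ versus $\dot b=-\xi^* b$ and the right-action conventions for $Ad^*$ are the places most likely to trip up the calculation, so I would fix these carefully at the outset using the explicit coadjoint formula $Ad^*_{(g,v)}(\mu,a)=(Ad^*_g(\mu-v^*(a)),g^*a)$.
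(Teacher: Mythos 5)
Your overall strategy (identify the reduced $1$-form, then take exterior derivatives) is the same as the paper's, but the central step of your plan fails as stated: the full $1$-form $\A^1_{(\mu,a)}$ does \emph{not} descend to $\Or_{(\mu,a)}$, so the claim ``$\theta_{(\mu,a)}$ is the reduction of $\A^1_{(\mu,a)}$'' cannot be verified, no matter how carefully you fix signs and right-action conventions. Indeed, the fibres of the projection $GV\to\Or_{(\mu,a)}$ are the cosets of $(GV)_{(\mu,a)}=\{(e_G,v)\mid v^*(a)=0\}$, so vertical tangent vectors are, in the left trivialization, of the form $(g,v)(0,u)$ with $u^*(b)=0$, where $b=g^*a$. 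On such a vector the form takes the value $\langle g^*a,u\rangle=\langle b,u\rangle$, and the condition $u^*(b)=0$ (i.e. $\langle b,\xi u\rangle=0$ for all $\xi\in\lag$) does \emph{not} force $\langle b,u\rangle=0$: in Elroy's beanie, take $a=1$, $g=e$, $u=1$, so that $u^*(a)=\mathfrak{Re}(-i)=0$ while $\langle a,u\rangle=1$. Equivalently, evaluating $\A^1_{(\mu,a)}$ on a lift of an orbit tangent vector gives $\langle\nu,\xi\rangle+\langle b,u\rangle$, not $\langle\nu,\xi\rangle$, and the extra term is not determined by the projected data $(\dot\nu,\dot b)$. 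Note also that the hypothesis $G_a=\{e_G\}$, which you invoke as what makes the descent work, only guarantees that $\theta_{(\mu,a)}$ itself is well defined (uniqueness of $\xi$ with $\dot b=\xi^*b$); it is irrelevant to the (false) horizontality of $\A^1_{(\mu,a)}$.

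The missing idea, which is the opening move of the paper's proof, is to discard the offending term \emph{before} reducing. Split $\A^1_{(\mu,a)}\big((g,v)(\xi,u)\big)=\langle Ad^*_g(\mu-v^*(a)),\xi\rangle+\langle g^*a,u\rangle$ and observe that the second summand, as a $1$-form on $GV$, equals $\langle a,dv\rangle=d\langle a,v\rangle$ --- the contraction of the fixed momentum $a$ with the velocity in the linear space $V$ --- hence is exact and contributes nothing to $d\A^1_{(\mu,a)}$, and therefore nothing to $\B_{(\mu,a)}$. The first summand alone does vanish on vertical vectors and is invariant, and your tangent-map computation (which is otherwise correct, including the point that $\dot b=\xi^*b$ depends only on $\xi$, unique since $G_a$ is trivial) shows precisely that it reduces to $\theta_{(\mu,a)}$. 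Then $\B_{(\mu,a)}$, being the reduction of $d\A^1_{(\mu,a)}=d\big(\langle Ad^*_g(\mu-v^*(a)),\xi\rangle\big)$, equals $d\theta_{(\mu,a)}$ because exterior derivative commutes with the reduction. With this one correction your argument coincides with the paper's.
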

\begin{proof}
First note that the second term $\langle g^*a,u\rangle$ on the right-hand side of the above expression for $\A^1_{(\mu,a)}$ does not in fact contribute to the computation of $\B_{(\mu,a)}$: it is the contraction of the fixed `momentum' $a$ with the tangent vector $gu$ to the linear space $V$ and therefore vanishes when taking the exterior derivative.

Therefore, it is sufficient to show that $\theta_{(\mu,a)}$ is the reduction to $\Or_{(\mu,a)}$ of the 1-form $\A^1_{(\mu,a)}$ with the term $\langle g^*a,u\rangle$ omitted. For that purpose we write down the tangent map of the projection $GV \to GV/GV_{(\mu,a)} \cong \Or_{(\mu,a)}$:
\[
(g,v,g\xi, g u)\in T(GV) \mapsto (\nu = Ad^*_g(\mu-v^*(a)), b=g^*a, \dot \nu=ad^*_\xi\nu, \dot b =\xi^*b)\in T\Or_{(\mu,a)}.
\]
From this we can deduce that the first term on the right-hand side of the above expression for $ \A^1_{(\mu,a)}(g,v)\Big((g,v)(\xi,u)\Big)$ equals $\langle \nu,\xi\rangle$.
\end{proof}

\paragraph{Routh reduction w.r.t. $V$.} Here we use the $V$-principal connection on $S \times GV $
\[
\A^2(v_s,g\xi,gu) = gu\in V,
\]
which is the pull-back to $S \times GV$ of the standard $V$-principal connection on the Abelian group $V$. If $a$ is a regular value of the momentum map, the associated Routhian is a function on $T(S \times G)$ and equals $$\Ro^{a}_2(v_s, g\xi)= \ell(v_s, \xi,\tau(v_s,\xi,g^*a))- \langle g^*a, \tau(v_s,\xi,g^*a)\rangle.$$ In the present case the magnetic 2-form $\B_a$ vanishes.

\paragraph{The compatible transformation.}
 \[
\setlength{\arraycolsep}{.8cm}
\begin{array}{ccc}
\Rnode{a}{P_1= S \times \Or_{(\mu,a)}} & \Rnode{b}{ P_2 = S \times G}\\[1.5cm]
\Rnode{c}{Q_1=S} & \Rnode{d}{Q_2=S \times G}
\end{array}\psset{nodesep=5pt,arrows=->}
\everypsbox{\scriptstyle}
\ncLine{a}{b}\Aput{F}
\ncLine{a}{c}
\ncLine{d}{c}\Bput{f}
\ncLine{b}{d} \Aput{\mbox{id}}
\]

In the above diagram we introduce the different mappings involved. The map $F$ is determined from the projection $\Or_{(\mu,a)} \to G,\; (\nu, b) \mapsto g$ where $g$ is uniquely determined from $g^*a=b$. $f$ is simply the projection onto the first factor, and then $Vf= \ker Tf = 0_S\times TG  \subset T(S \times G)$. The pair $(F,f)$ is a transformation pair.
\begin{theorem}\label{thm:equivred} Assume that $G_a=\{e_G\}$ for $a \in V^*$ and that the map $\cdot^*a: V\to \lag^*$; $v\mapsto v^*a$ is onto. Then the two magnetic Lagrangian systems $(S \times \Or_{(\mu,a)}\to S, \Ro^{(\mu,a)}_1, \B_{(\mu,a)})$ and $(S \times G \to  S \times G, \Ro^{a}_2,0)$ are equivalent in the sense of Theorem~\ref{thm:main}, i.e. there is a $(F,f)$-compatible diffeomorphism of the form $\psi_{\Ro^{a}_2,\beta}$ and a connection $\A:TQ_2\to Vf$ such that the Lagrangians and the magnetic 2-forms satisfy
\begin{enumerate}
\item $\Ro^{(\mu,a)}_1(v_{q_1},p_1)=\left(\psi_{\Ro^a_2,\beta}^*\Ro^{a}_2\right)  (v_{q_1},p_1) -  \langle \beta (p_1),\A(\psi_{\Ro^a_2,\beta}^{TQ_2}(v_{q_1},p_1))\rangle$;
\item $\B_1= F^*\B_2 + d\left(\langle \beta , \A_{P_1}\rangle \right)$.
\end{enumerate}
The Hamiltonian vector fields $X_{E_{ \Ro^{(\mu,a)}_1}}$ and $X_{E_{ \Ro^{a}_2}}$ are $\psi_{\Ro^{a}_2,\beta}$-related. \end{theorem}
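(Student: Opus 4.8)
The plan is to verify the hypotheses of Theorem~\ref{thm:main} directly, by producing an explicit connection $\A$ on $f:S\times G\to S$ and an explicit map $\beta:S\times\Or_{(\mu,a)}\to V^*f$; once $\psi_{\Ro^a_2,\beta}$ has been shown to be a well-defined diffeomorphism satisfying conditions (1)--(2), the symplectomorphism property and the $\psi_{\Ro^a_2,\beta}$-relatedness of the Hamiltonian vector fields are immediate consequences of that theorem. For $\A$ I would take the flat connection on the trivial bundle $S\times G\to S$, whose horizontal subbundle is $TS\times 0_G$, so that $\A(v_s,v_g)$ is simply the vertical component $v_g\in V_{(s,g)}f$. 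For $\beta$ I would use the $\lag^*$-slot $\nu$ of a point $(\nu,b)\in\Or_{(\mu,a)}$, left-translated to the fibre over $g$: since $G_a=\{e_G\}$, the equation $g^*a=b$ determines $g$ uniquely, and I set $\beta(s,(\nu,b))=\nu_g\in T^*_gG=V^*_{(s,g)}f$, where $\langle\nu_g,v_g\rangle=\langle\nu,g^{-1}v_g\rangle$ for all $v_g\in T_gG$.

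First I would check that $\Ro^a_2$ is $f$-regular and compute $\psi_{\Ro^a_2,\beta}$ in closed form. Left-trivialising $TG$ and differentiating $\Ro^a_2(v_s,g\zeta)=\ell(v_s,\zeta,\tau(v_s,\zeta,g^*a))-\langle g^*a,\tau(v_s,\zeta,g^*a)\rangle$ with respect to $\zeta$, the two $\tau$-contributions cancel because $\F_3\ell(v_s,\zeta,\tau(v_s,\zeta,b))=b$, so the $f$-vertical Legendre transform reduces to $\zeta\mapsto\F_2\ell(v_s,\zeta,\tau(v_s,\zeta,b))$; by $GV$-regularity this is a diffeomorphism onto $\lag^*$, which is precisely $f$-regularity. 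The defining relation for $\psi_{\Ro^a_2,\beta}$ then reads $\F_2\ell(v_s,\zeta,\tau(v_s,\zeta,b))=\nu$, whose unique solution is $\zeta=\hat\chi_1(x,\dot x,\nu,b)$. Hence $\psi_{\Ro^a_2,\beta}(v_s,(\nu,b))=\big((v_s,g\hat\chi_1),(s,g)\big)$, independently of the auxiliary base velocity used in Step~3.

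With this explicit form, conditions (1)--(2) become routine verifications. For (1), the relation $\tau(v_s,\hat\chi_1,b)=\hat\chi_2$ gives $\psi_{\Ro^a_2,\beta}^*\Ro^a_2=\ell(v_s,\hat\chi_1,\hat\chi_2)-\langle b,\hat\chi_2\rangle$, while the correction term evaluates to $\langle\beta,\A(\psi_{\Ro^a_2,\beta}^{TQ_2})\rangle=\langle\nu_g,g\hat\chi_1\rangle=\langle\nu,\hat\chi_1\rangle$; subtracting reproduces exactly $\Ro^{(\mu,a)}_1=\ell(v_s,\hat\chi_1,\hat\chi_2)-\langle\nu,\hat\chi_1\rangle-\langle b,\hat\chi_2\rangle$. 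For (2), since $\B_2=0$ it remains to show $\B_{(\mu,a)}=d\langle\beta,\A_{P_1}\rangle$; by Lemma~\ref{lemma:forceterms} this reduces to the pointwise identity $\langle\beta,\A_{P_1}\rangle=\theta_{(\mu,a)}$. Evaluating the left-hand side on a tangent vector whose orbit component is $(\dot\nu,\dot b=\xi^*b)$, the image under $TF$ has left-trivialised $G$-velocity $\xi$ (the unique $\xi$ with $\dot b=\xi^*b$, well defined because $G_a=\{e_G\}$), so $\langle\beta,\A_{P_1}\rangle=\langle\nu,\xi\rangle$, matching Definition~\ref{def:oneform}. Finally, the surjectivity of $v\mapsto v^*a$ yields $\dim\Or_{(\mu,a)}=2\dim G$, hence $\dim T_{P_1}Q_1=\dim T_{P_2}Q_2$ and the regularity of $\beta$ required in Step~2, so $\psi_{\Ro^a_2,\beta}$ is a diffeomorphism and Theorem~\ref{thm:main} applies.

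The main obstacle I anticipate is not any single computation but the simultaneous construction: one must choose $\A$ and $\beta$ so that one and the same map $\psi_{\Ro^a_2,\beta}$ reproduces both the first-stage Routhian (condition (1)) and, via $d\langle\beta,\A_{P_1}\rangle$, the coadjoint-orbit magnetic form $\B_{(\mu,a)}=d\theta_{(\mu,a)}$ (condition (2)). The delicate point is the geometric identification in condition (2), where the orbit directions $\dot b=\xi^*b$ must be matched with left-trivialised velocities on $G$; this is exactly where both standing hypotheses are used, $G_a=\{e_G\}$ for the well-definedness of $\theta_{(\mu,a)}$ and of $\beta$, and the surjectivity of $v\mapsto v^*a$ for the dimension count that makes $\psi_{\Ro^a_2,\beta}$ a diffeomorphism.
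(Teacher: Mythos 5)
Your proposal follows essentially the same route as the paper's own proof: the same flat connection with horizontal distribution $TS\times 0_G$, the same map $\beta$ sending $(s,\nu,b)$ to $\nu\circ TL_{g^{-1}}$ with $g$ determined uniquely by $g^*a=b$, the same cancellation of the $\tau$-contributions in the fibre derivative of $\Ro^a_2$ (so that the momentum condition of Proposition~\ref{prop:compatible} becomes $\F_2\ell(v_s,\xi,\tau(v_s,\xi,g^*a))=\nu$, with unique solution $\xi=\chi_1(v_s,\nu,b)$ and $\tau(v_s,\chi_1,b)=\chi_2$), the same verification of condition (1), and the same identification $\langle\beta,\A_{P_1}\rangle=\theta_{(\mu,a)}$ combined with Lemma~\ref{lemma:forceterms} for condition (2). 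All of these computations are correct.

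The one step where your justification is insufficient as stated is the Step-2 regularity of $\beta$, i.e.\ that $\beta|_{F^{-1}(s,g)}:F^{-1}(s,g)\to V^*_{(s,g)}f$ is a diffeomorphism; this is exactly what makes $\psi_{\Ro^a_2,\beta}$ a diffeomorphism rather than merely a compatible map. You infer it from the dimension count $\dim\Or_{(\mu,a)}=2\dim G$, hence $\dim T_{P_1}Q_1=\dim T_{P_2}Q_2$, but equality of dimensions does not by itself give surjectivity of the fibre map: an injective map between equidimensional manifolds need not be onto. What must be shown is that for \emph{every} $\nu\in\lag^*$ the pair $(\nu,g^*a)$ actually lies on the coadjoint orbit $\Or_{(\mu,a)}$, so that the fibre $F^{-1}(s,g)$ is a full affine copy of $\lag^*$ and not a proper subset. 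This is where the surjectivity of $\cdot^*a$ enters directly rather than through dimensions: since orbit points have the form $\big(Ad^*_g(\mu-v^*a),\,g^*a\big)$, one solves $v^*a=\mu-Ad^*_{g^{-1}}\nu$ for $v\in V$, which is precisely the paper's pointwise inverse construction for $\beta$. The repair is immediate (because $\cdot^*a$ is linear, full-dimensional image is equivalent to surjectivity), but the explicit inverse, not the dimension count, is the argument that closes the proof.
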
 
\begin{proof} We now introduce the remaining elements needed to apply Theorem~\ref{thm:main}, i.e. a map $\beta$ and a connection $f:Q_2\to Q_1$:
\begin{enumerate}
\item The map $\beta$ is defined as follows: $\beta: P_1\to V^*f,\;(s,\nu,b)\mapsto (s,g, 0_s,\nu \circ TL_{g^{-1}})$ where $g\in G$ is such that $g^*a=b$. Note that the conditions $G_a=\{e_G\}$ and $\mbox{im} \cdot^*a = \lag^*$ imply that the fibres $F^{-1}(s,g) \cong V / \ker \cdot^*a$ and $V^*_{s}f\cong \lag^*$ are diffeomorphic. We show this by constructing pointwise an inverse for $\beta$. Given an arbitrary element in $V^*_{(s,g)}f$ and let $\nu$ be the corresponding element in the dual of the Lie algebra $\lag$.  Because $\cdot^*a$ is onto, a vector $v\in V$ exists such that $v^*a=\mu-Ad^*_{g^{-1}}\nu$. The element $(\nu,b=g^*a)$ then determines a point in $\Or_{(\mu,a)}$, is unique and by construction, it determines the inverse image for $\nu$ under $\beta|_{F^{-1}(s,g)}$.

\item The connection used to relate the dynamics is the pull-back to $S \times G$ of the standard zero-curvature connection with horizontal distribution $0_G\times TS\subset T(G\times S)$.
\end{enumerate}

Note that the contraction of the $\beta$-map and the vertical part of the standard connection precisely equals the 1-form $\theta_{(\mu,a)}$ on $\Or_{(\mu,a)}$: $(\nu,b, \dot \nu, \dot b)\mapsto \langle \nu, \xi \rangle$, with $\xi^* b = \dot b$. From Lemma~\ref{lemma:forceterms}, the exterior derivative of $\theta_{(\mu,a)}$ is precisely $\B_{(\mu,a)}$.

It now remains to show that the two Routhians $\Ro^{(\mu,a)}_1$ and $\Ro^{a}_2$ are transformed into each other by means of $\psi_{\Ro^a_2,\beta}$ and $\A$. For that purpose we derive an explicit formula for the second or `momentum' condition in Proposition~\ref{prop:compatible} (put $L_2=\Ro^a_2$). Let $(v_s,g\xi)$ be arbitrary in $T(S \times G)$.  Fix an element $g\eta\in T_gG$. Then
\begin{align*}
\langle \F_2 {\Ro^{a}_2}(v_s,g\xi), g\eta \rangle &= \left.\frac{d}{d\e}\right|_{\e=0} \Ro^{a}_2(v_s,g\xi+\e g\eta) \\ &=\left.\frac{d}{d\e}\right|_{\e=0} \Big(\ell(v_s,\xi+\e\eta,\tau(v_s,\xi+\e\eta,g^*a) )- \langle g^*a, \tau(v_s,\xi+\e\eta,g^*a)\rangle\Big)\\
&= \langle \F_2\ell (v_s,\xi,\tau(v_s,\xi,g^*a)), \eta\rangle.
\end{align*}
Therefore, to construct the transformation $\psi_{\Ro_2,\beta}$ we have to solve the following equation  for $\xi$:
\[
\F_2\ell (v_s,\xi,\tau(v_s,\xi,g^*a))=\beta(s,\nu,b)\circ TL_g = \nu.\]
By definition of $\tau$, the solution $\xi$ is precisely $\chi_1(v_s,\nu,b)$. From this, we necessarily have that the composition $\tau(v_s,\chi_1(v_s,\nu,b), b)$ equals $\chi_2(v_s,\nu,b)$. We now compute the transformation of $\Ro_2$ under $\psi_{\Ro_2,\beta}$ and $\A$:
\begin{align*}
&\Big(\Ro^{a}_2(v_s,g\xi) - \langle \nu, \xi \rangle\Big) \bigg|_{\F_2 \ell (v_s,\xi,\tau(v_s,\xi,g^*a)) = \nu} \\
& = \ell(v_s,\chi_1(v_s,\nu,b),\chi_2(v_s,\nu,b))- \langle b, \chi_2(v_s,\nu,b)\rangle- \langle \nu,\chi_1(v_s,\nu,b) \rangle.
\end{align*} This is precisely the Routhian $\Ro^{(\mu,a)}_1$ and using Theorem~\ref{thm:main}, this concludes the proof.
\end{proof}

\subsection{Example: Elroy's Beanie}
The example is taken from~\cite{marsdenphases}. The system consists of two planar rigid bodies that
are attached in their center of mass. The system moves in the plane
and is subject to some conservative force with potential $V$. The
configuration space is $S^1 \times SE(2)$, with coordinates $(\varphi, \theta, x,y)$. Here $(x,y)$ refers to the position of the center of
mass, $\theta$ is the rotation angle of the first rigid body, and $\varphi\in S^1$
denotes the relative rotation of the second body w.r.t.\ the first. The
kinetic energy of the system is $SE(2)$-invariant and we will also
suppose that the potential is $SE(2)$-invariant. This means in fact that only the relative position of the two bodies matters for the dynamics of the
system. The Lagrangian is of the form
\[
L=\frac12 m ({\dot x}^2 + {\dot y}^2) + \frac12 I_1 {\dot\theta}^2 +
\frac12 I_2 (\dot\theta+\dot\varphi)^2 -V(\varphi)\,.
\]
The Euler-Lagrange equations of the system are, written in normal form,
\[
\ddot x =0, \quad \ddot y=0, \quad \ddot\theta = \frac{1}{I_1} V',
\quad \ddot\varphi = - \frac{I_1+I_2}{I_1I_2} V'\,,
\]
where $V'= dV/d\varphi$.
\paragraph{The semi-direct product $SE(2)$.}
The special Euclidean group $SE(2)$ is the semi-direct product of the Lie group $G=S^1$ with $V=\R^2$, parametrized by $(\theta,x,y)$, where the action of $G$ on $V$ is defined by rotation in the plane. For convenience we identify $\R^2$ with $\C$ in the usual way: $(x,y)\mapsto z=x+iy$. Then the action of an element $\theta\in S^1$ on $z\in\C$ is by multiplication $e^{i\theta}z$.   

The identity of $SE(2)$ corresponds to $(\theta=0,z=0)$ and the group
multiplication is given by
\[
(\theta_1,z_1)*(\theta_2,z_2) =
( \theta_1+\theta_2,e^{i\theta_1}z_2 + z_1).
\]
Elements of the Lie-algebra $se(2)$ of $SE(2)$ are denoted by $(\xi,w)\in \R\times\C$. The associated infinitesimal action of the Lie algebra $\R$ of $S^1$ on $\C$ then reads $i\xi z$, with $\xi\in\R$ and $z\in \C$ arbitrary.  The adjoint action equals $Ad_{(\theta,z)} (\xi,w) = (\xi, e^{i\theta}w- i\xi z)$. If $(\theta,x,y,\dot\theta,\dot x,\dot y)$ is an element in $TSE(2)$, the corresponding element in the left identification with $SE(2)\times se(2)$ is $(\theta,z,\dot \theta,w)$, with $w= e^{-i\theta}\dot z$ and $\dot z= \dot x+i\dot y$. Denote the real and complex part of $w$ by $u,v$ respectively, $w=u+iv$.
This allows us to write down the Lagrangian $\ell$ on $TS^1 \times se(2)$ in the left identification as $$\ell(\varphi,\dot\varphi, \dot\theta, w) = \frac12 m (u^2 + v^2) + \frac12 I_1 {\dot\theta}^2 +
\frac12 I_2 (\dot\theta+\dot\varphi)^2 -V(\varphi).$$
Elements of the dual  $se^*(2) \cong \R\times \C$ of $se(2)$ are written as $(\mu, a)$ and the contraction with an arbitrary element $(\xi,w)\in se(2)$ is $\mu \xi + \mathfrak{Re} (a \overline w)$. The dual action $g^* a = e^{-i\theta} a$ and infinitesimally $\xi^* a= -i\xi a$. Clearly the isotropy group of $a$ is trivial for any $a$. Finally, for the element $z^*a$ in the dual of the Lie algebra of $S^1$ we obtain: $z^*a =  \mathfrak{Re}(-ia\overline z)$. The map $\cdot^*a: \C \to \R$ is onto for any $a\neq 0$.

\paragraph{Reduction with respect to $SE(2)$.}
The Lagrangian being of mechanical type we can compute the Routhian as follows:

\begin{align*} 2(\Ro^{(\mu,a)}_1+V)(\varphi,\dot\varphi,\nu,b) &= \left(\fpd{\ell}{\dot \varphi} \dot \varphi - \fpd{\ell}{\dot \theta} \dot \theta - \fpd{\ell}{u} u - \fpd{\ell}{v}
v\right)_{\scriptsize\left\{\begin{array}{l} \nu=(I_1+I_2)\dot \theta + I_2 \dot\varphi\\ b = m w\end{array} \right.}\\
&= \left(I_2 \dot \varphi^2 - (I_1+I_2)\dot \theta^2 - m u^2 - m
v^2\right)_{\scriptsize\left\{\begin{array}{l} \nu=(I_1+I_2)\dot \theta + I_2 \dot\varphi\\ b = m w\end{array} \right.}.\end{align*}
The momentum relations are regular:
$$\left\{\begin{array}{l} \dot \theta = \frac{\nu-I_2\dot\varphi}{I_1+I_2} = \chi_1(\varphi,\dot\varphi,\nu,b), \\ w = \frac{b}{m}=\chi_2(\varphi,\dot\varphi,\nu,b) . \end{array}\right.$$
Finally we obtain the Routhian after a straightforward computation:
\[
\Ro^{(\mu,a)}_1(\varphi,\dot\varphi,\nu,b)= \frac12 \frac{I_1I_2}{I_1+I_2} \dot \varphi^2 + \frac{I_2}{I_1+I_2} \nu\dot\varphi -V(\varphi) +  \frac{b\overline b}{2m} - \frac12\frac{\nu^2}{I_1+I_2}.
\]
\paragraph{The reduced equations of motion.}
The Routh reduced equations of motions then equal
\begin{align*}
&\dot \nu=\mathfrak{Re}\left(-\frac{i}{m} \overline b  b\right) =0, \\
&\dot b = -i \left(\frac{\nu-I_2\dot\varphi}{I_1+I_2}\right) b,\\
&\frac{d}{dt}\left( \fpd{\Ro^{(\mu,a)}_1}{\dot \varphi}(\nu,b,\varphi,\dot \varphi)\right) - \fpd{\Ro^{(\mu,a)}_1}{\varphi}(\nu,b,\varphi,\dot \varphi)=\frac{I_1I_2}{I_1+I_2} \ddot \varphi  + \frac{I_2}{I_1+I_2} \dot \nu + V'(\varphi)=0.
\end{align*}
The second equation of motion is clearly a rotation of the momentum $b$ with angular velocity $(I_2\dot\varphi-\nu)/(I_1+I_2)$. The choice of the fixed momentum $a$ is reflected in these equations as $b\overline b= a\overline a$.

\paragraph{Abelian reduction.}
We now perform Routh reduction w.r.t the abelian symmetry group $\R^2$ of translations in the $x$ and $y$ direction. Let us denote the symmetry group by $V=\R^2$ and study the quotient spaces. The conserved (complex) momentum for this action is $a=m(\dot x+i \dot y)$. We use the same momentum values as before: let $b = e^{-i\theta}a$. The map $\tau(\dot\theta, b, \varphi,\dot\varphi) =\frac{b}{m}$. The Routhian is obtained by computing
\begin{align*} 2(\Ro^{a}_2+V)(\theta,\varphi,\dot \theta,\dot\varphi) &= \left(\fpd{\ell}{\dot \varphi} \dot \varphi +\fpd{\ell}{\dot \theta} \dot \theta - \fpd{\ell}{u} u - \fpd{\ell}{v}
v\right)_{ e^{-i\theta}a= m w}\\
&= \left(I_1 \dot \theta^2 +I_2(\dot \theta+\dot\varphi)^2 - m w\overline w\right)_{e^{-i\theta}a= m w}\\ &=I_1 \dot \theta^2 +I_2(\dot \theta+\dot\varphi)^2 -\frac{a\overline a}{m}.  \end{align*}
Thus the Routh reduced system is a standard Lagrangian system on $S^1\times S^1$ with Lagrangian $ \frac12 I_1 {\dot\theta}^2 +  \frac12I_2 (\dot\theta+\dot\varphi)^2-V(\varphi)$ (we ignore irrelevant constant terms).

\paragraph{Equivalence.}
Using Theorem~\ref{thm:equivred}, both reduced systems are equivalent in the sense that a transformation $\psi_{\Ro_2^a,\beta}$  exists relating both magnetic Lagrangian systems. In this case, the $\beta$-map fixes the remaining momentum $\beta(\nu,b,\varphi)= \nu$. The diffeomorphism $\psi_{\Ro^a_2,\beta}$ then satisfies $\psi_{\Ro^a_2,\beta}(\varphi,\dot\varphi,\nu,b=e^{-i\theta} a) = (\theta,\varphi, \dot\varphi, \dot\theta= (\nu-I_2\dot\varphi)/(I_1+I_2))$. It then takes some tedious computations to see that $\Ro^{(\mu,a)}_1 = \psi_{\Ro^a_2,\beta}^* \Ro^a_2 - \nu (\nu-I_2\dot\varphi)/(I_1+I_2))$.
The 1-form $\theta_{(\mu,a)}$ on $\Or_{(\mu,a)}$ from Definition~\ref{def:oneform} equals $(\dot \nu,\dot b = -i \dot \theta b) \mapsto \nu \dot \theta$.

{\bf Acknowledgements.} BL is an honorary postdoctoral researcher at the Department of Mathematics of Ghent University and associate
academic staff at the Department of Mathematics of KU~Leuven. This work is sponsored by a Research Programme of the Research Foundation -- Flanders (FWO). This work is part of the {\sc irses} project {\sc
geomech} (nr.\ 246981) within the 7th European Community Framework Programme.

\bibliographystyle{plain}

\end{document}